\tikzset{font=\footnotesize}
\tikzstyle{every state}=[inner sep=4pt, minimum size=8pt]
\newcommand\Z{\mathbb{Z}}
\newcommand\N{\mathbb{N}}
\newcommand\R{\mathbb{R}}
\newcommand\D{\mathbb{D}}
\newcommand\B{\mathbb{B}}
\newcommand{\pref}[1]{\mathit{pref}(#1)}
\newcommand{\req}{{\it req}}
\newcommand{\ack}{{\it ack}}
\newcommand{\other}{{\it other}}
\newcommand{\tr}{\mathtt{T}}
\newcommand{\fa}{\mathtt{F}}
\newcommand{\limavg}{\mathsf{lim\,avg}}
\newcommand{\art}{\mathsf{art}}
\newcommand{\mrt}{\mathsf{mrt}}
\newcommand{\overbar}[1]{\mkern 1.5mu\overline{\mkern-1.5mu#1\mkern-1.5mu}\mkern 1.5mu}
\newtheorem{remark}{Remark}
\newtheorem{definition}{Definition}
\newtheorem{example}{Example}
\newtheorem{theorem}{Theorem}
\newtheorem{proposition}{Proposition}
\newtheorem{corollary}{Corollary}
\begin{document}
\title{Quantitative and Approximate Monitoring}

\author{\IEEEauthorblockN{Thomas A. Henzinger}
\IEEEauthorblockA{IST Austria\\
tah@ist.ac.at}
\and
\IEEEauthorblockN{N. Ege Sara\c{c}}
\IEEEauthorblockA{IST Austria\\
ege.sarac@ist.ac.at}}
	
\maketitle

\begin{abstract}
	In runtime verification, a monitor watches a trace of a system and, if possible, decides after observing each finite prefix whether or not the unknown infinite trace satisfies a given specification.
	We generalize the theory of runtime verification to monitors that attempt to estimate numerical values of quantitative trace properties (instead of attempting to conclude boolean values of trace specifications), such as maximal or average response time along a trace.
	Quantitative monitors are approximate: with every finite prefix, they can improve their estimate of the infinite trace's unknown property value.
	Consequently, quantitative monitors can be compared with regard to a precision-cost trade-off: better approximations of the property value require more monitor resources, such as states (in the case of finite-state monitors) or registers, and additional resources yield better approximations.
	We introduce a formal framework for quantitative and approximate monitoring, show how it conservatively generalizes the classical boolean setting for monitoring, and give several precision-cost trade-offs for monitors.
	For example, we prove that there are quantitative properties for which every additional register improves monitoring precision.
\end{abstract}


%

\section{Introduction} \label{sect:intro}

We provide a theoretical framework for the convergence of two recent trends in computer-aided verification.
The first trend is {\it runtime verification} \cite{Bartocci18}.
Classical verification aspires to provide a judgment about all possible runs of a system;
runtime verification, or {\it monitoring}, provides a judgment about a single, given run.
There is a trend towards monitoring because the classical ``verification gap'' keeps widening:
while verification capabilities are increasing, system complexity is increasing more quickly,
especially in the time of many-core processors, cloud computing, cyber-physical systems, and neural networks.
Theoretically speaking,
the paradigmatic classical verification problem is {\it emptiness} of the product between system and negated specification
(“does some run of the given system violate the given specification?”),
whereas the central runtime verification problem is {\it membership}
(“does a given run satisfy a given specification?”).
Since membership is easier to solve than emptiness, this has ramifications for specification formalisms;
in particular, there is no need to restrict ourselves to $\omega$-regular specifications or finite-state monitors.
We do restrict ourselves to the {\it online} setting,
where a monitor watches the finite prefixes of an infinite run and, with each prefix, renders a {\it verdict},
which could signal a satisfaction or violation of the specification, or “don’t know yet.”

The second trend is {\it quantitative verification}
\cite{Kwiatkowska07,Henzinger13}.
While classical verification is boolean, in that every complete run either satisfies or violates the specification and,
accordingly, the system is either correct (i.e., without a violating run) or incorrect,
quantitative verification provides additional, often numerical information about runs and systems.
For example, a quantitative specification may measure the probability of an event,
the ``response time'' or the use of some other resource along a run,
or by how much a run deviates from a correct run.
In {\it quantitative runtime verification}, we wish to observe, for instance,
the maximal or average response time along a given run, not across all possible runs.
Quantitative verification is interesting for an important reason beyond its ability to provide non-boolean information:
it may provide {\it approximate} results \cite{Boker14}.
A monitor that under- or over-approximates a quantitative property may be able to do so with fewer computational
resources than a monitor that computes a quantitative property's exact value.
We provide a theoretical framework for {\it quantitative and approximate monitoring},
which allows us to formulate and prove such statements.

In boolean runtime verification frameworks, there are several different notions of {\it monitorability}
\cite{Kim02,Falcone12,Aceto19a}.
Along with safety and co-safety, a well-studied definition is by \cite{Pnueli06b} and \cite{Bauer11}:
after watching any finite prefix of a run, if a positive or negative verdict has not been reached already,
there exists at least one continuation of the run which will allow such a verdict.
This {\em existential} definition is popular because a {\em universal} definition,
that on every run a positive or negative verdict will be reached eventually, is very restrictive;
only boolean properties that are both safe and co-safe can be monitored universally \cite{Aceto19a}.
By contrast, the existential definition covers finite boolean combinations of safety and co-safety, and more \cite{Falcone12}.
In \emph{quantitative approximate} monitoring, however, there is less need to prefer an existential definition of monitoring
because usually many approximations are available, even if some are poor.
The main attention must shift, rather, to the quality---i.e., precision---of the approximation.
Our quantitative framework fully 
generalizes the standard boolean versions of monitorability in a universal setting where
monitors yield approximate results on all runs and can be compared regarding their precision and resource use.
In fact, we advocate the consideration of precision-resource trade-offs as a central design criterion for monitors,
which requires a formalization of monitoring in which precision-resource trade-offs can be analyzed.
\emph{Such a formalization is the main contribution of this paper.}

As an example,
let us illustrate a precision-resource trade-off that occurs when using \emph{register machines} as monitors.
Consider a server that processes requests.
Each trace of the server is an infinite word over the alphabet $\{\req,\ack,\other\}$ of events.
An interesting quantitative property of the server is \emph{maximal response time},
which measures the maximal number of events before each $\req$ event in a trace is followed by an $\ack$ event. 
This property, denoted $p_1$, is a function that maps every infinite word to a value in $\N \cup \{\infty\}$.
To construct a precise online monitor for $p_1$,
we need two counter registers $x$ and $y$ and the ability to compare their values:
as long as $x < y$, register $x$ counts the current response time,
and $y$ stores the maximal response time encountered so far;
if $x=y$, counting continues in $y$, and $x$ is reset to 0.
The output, or \emph{verdict} value, of the monitor is always $y$.
In this way the 2-counter monitor $M_{\textit{max}}$ generates the verdict function depicted in Figure~\ref{fig:mrtVerdict}.

\begin{figure}[h]
	\centering
	\scalebox{0.922}{
		\begin{tikzpicture}
			\draw[->] (0,0) -- (6.5,0) node[anchor=west] {$f$};
			\draw
			(0.75,0) node[anchor=north] {$\req\vphantom{k}$}
			(1.5,0) node[anchor=north] {$\ack\vphantom{k}$}
			(2.25,0) node[anchor=north] {$\req\vphantom{k}$}
			(3,0) node[anchor=north] {$\other\vphantom{k}$}
			(3.75,0) node[anchor=north] {$\ack\vphantom{k}$}
			(4.5,0) node[anchor=north] {$\req\vphantom{k}$}
			(5.25,0) node[anchor=north] {$\ack\vphantom{k}$}
			(6,0) node[anchor=north] {$\other\vphantom{k}$};
			\draw[->] (0,0) -- (0,2.5) node[anchor=south] {$M_{\textit{max}}$};
			\draw
			(0,1) node[anchor=east] {1}
			(0,2) node[anchor=east] {2};	
			\draw[thick] (0,0) -- (0.75,0) -- (1.5,1) -- (2.25,1) -- (3,1) -- (3.75,2) -- (4.5,2) -- (5.25,2) -- (6,2);
		\end{tikzpicture}
	}
	\caption{Monitoring maximal response time for a trace $f$.}
	\label{fig:mrtVerdict}
\end{figure}
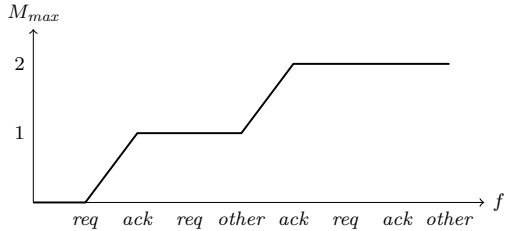

Considering the same server, one may also be interested in the \emph{average response time} of a trace.
The precise monitoring of average response time requires 3 counters and division between counter registers to generate
outputs.
Moreover, verdict values can fluctuate along a trace, producing a non-monotonic verdict function.
Figure~\ref{fig:artVerdict} shows the verdict function generated by a 3-register monitor $M_{\textit{avg}}$ with division.

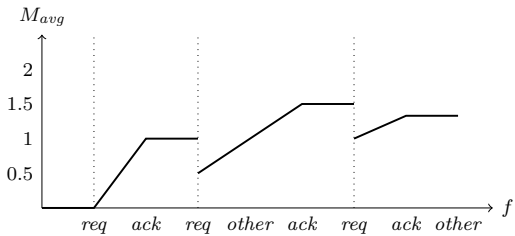
\begin{figure}[h]
	\centering
	\scalebox{0.922}{
		\begin{tikzpicture}
			\draw[->] (0,0) -- (6.5,0) node[anchor=west] {$f$};
			\draw
			(0.75,0) node[anchor=north] {$\req\vphantom{k}$}
			(1.5,0) node[anchor=north] {$\ack\vphantom{k}$}
			(2.25,0) node[anchor=north] {$\req\vphantom{k}$}
			(3,0) node[anchor=north] {$\other\vphantom{k}$}
			(3.75,0) node[anchor=north] {$\ack\vphantom{k}$}
			(4.5,0) node[anchor=north] {$\req\vphantom{k}$}
			(5.25,0) node[anchor=north] {$\ack\vphantom{k}$}
			(6,0) node[anchor=north] {$\other\vphantom{k}$};
			\draw[->] (0,0) -- (0,2.5) node[anchor=south] {$M_{\textit{avg}}$};
			\draw
			(0,0.5) node[anchor=east] {0.5}
			(0,1) node[anchor=east] {1}
			(0,1.5) node[anchor=east] {1.5}
			(0,2) node[anchor=east] {2};	
			\draw[thick] (0,0) -- (0.75,0) -- (1.5,1) -- (2.25,1);
			\draw[thick] (2.25,0.5) -- (3,1) -- (3.75,1.5) -- (4.5,1.5);
			\draw[thick] (4.5,1) -- (5.25,1.33) -- (6,1.33); 
			\draw[dotted] (0.75,0) -- (0.75,2.5);
			\draw[dotted] (2.25,0) -- (2.25,2.5);
			\draw[dotted] (4.5,0) -- (4.5,2.5);
		\end{tikzpicture}
	}
	\caption{Monitoring average response time for a trace $f$.}
	\label{fig:artVerdict}
\end{figure}

Now, let us consider an alphabet $\{\req_1,\ack_1,\req_2,\ack_2,\\\other\}$ with two types of matching $(\req_i,\ack_i)$ pairs.
The quantitative property $p_2$ measures the maximal response times for both pairs:
it maps every trace to an ordered pair of values from $\N \cup \{\infty\}$.
A construction similar to the one for $p_1$ gives us a precise monitor that uses 4 counters.
Indeed, we will show that 3 counters do not suffice to monitor $p_2$ precisely.
However, the quantitative property $p_2$ can be approximately monitored with 3 counters:
two counters can be used to store the maxima so far, and the third counter may track the current response
time prioritizing the pair $(\req_1,\ack_1)$ whenever both request types are active.
This 3-counter monitor will always under-approximate the maximal response time for the $(\req_2,\ack_2)$ pair. 
In case the resources are even scarcer, a 2-counter monitor can keep the same value as an under-approximation for both
maximal response times in one counter,
and use the second counter to wait sequentially for witnessing $(\req,\ack)$ pairs of both types before incrementing the
first counter. 
Just like the number of registers leads to precision-resource trade-offs for register monitors,
the number of states leads to precision-resource trade-offs for finite-state monitors.
For instance, a fixed number of states can encode counter values up to a certain magnitude,
but can under-approximate larger values.
We provide a general formal framework for quantitative and approximate monitoring which allows us to study such trade-offs
for different models of monitors.

In Section~\ref{sect:defn}, we define quantitative properties, approximate verdict functions, and how the precision of
monitors can be compared.
In Section~\ref{sect:quant}, we give a variety of different examples and closure operations for quantitative monitoring.
We also characterize the power of the important class of \emph{monotonic} monitors by showing that, in our framework,
the quantitative properties that can be monitored universally (on all traces) and precisely by monotonically increasing
verdict functions are exactly the co-continuous properties on the value domain.
In Section~\ref{sect:bool}, we embed several variations of the boolean value domain within our quantitative framework.
This allows us to characterize, within the safety-progress hierarchy \cite{Chang93}, which boolean properties can be
monitored universally and existentially; see Tables~\ref{tab:univ}~and~\ref{tab:exis}.
The section also connects our quantitative definitions of monitorability to the boolean definitions of
\cite{Falcone12,Aceto19a,Pnueli06b,Bauer11} and shows that our quantitative framework generalizes their popular boolean
settings conservatively.
Finally, in Section~\ref{sect:reg}, we present precision-resource trade-offs for register monitors.
For this purpose, we generalize the quantitative setting of \cite{Ferrere18,Ferrere20} to approximate monitoring within our framework.
In particular, we show a family of quantitative properties for which every additional counter register
improves the monitoring precision.

\smallskip
\noindent
\textbf{Related work.}
In the boolean setting, the first definition of \emph{monitorability} \cite{Kim02} focused on detecting violations of a
property.
This definition was generalized by \cite{Pnueli06b} and \cite{Bauer11} to capture satisfactions as well.
Later, instead of using a fixed, three-valued domain for monitoring, Falcone et al.~\cite{Falcone12} proposed a
definition with parameterized truth domains.
According to their definition, every linear-time property is monitorable in a four-valued domain where the usual
``inconclusive'' verdict is split into ``currently true'' and ``currently false'' verdicts.
Frameworks that capture existential as well as universal modalities for monitorability were studied in a
branching-time setting \cite{Francalanza17, Aceto19b}.

The prevalence of LTL and $\omega$-regular specifications in formal verification is also reflected in runtime
verification \cite{Bauer11, Bauer06, Bauer07}.
Recently, several more expressive models have been proposed, such as register monitors \cite{Ferrere18},
monitors for visibly pushdown languages \cite{Decker13}, quantified event automata \cite{Barringer12},
and many others for monitoring data events over an infinite alphabet of observations, as surveyed in \cite{Havelund18}.
One step towards quantitative properties is the augmenting of boolean specifications with quantities,
e.g., discounting and averaging modalities \cite{deAlfaro03,Bouyer14},
timed specifications \cite{Bonakdarpour13, Pinisetty17},
or specifications that include continuous signals, particularly in the context of cyber-physical systems
\cite{Bakhirkin17, Jakvsic18}.
Another prominent line of work that provides a framework for runtime verification beyond finite-state is that of Alur et al.~\cite{Alur17, Alur19, Alur20}.  
Their work focuses on runtime decidability issues for boolean specifications over streams of data events,
but they do not consider approximate monitoring at varying degrees of precision.
Quantitative frameworks for comparing traces and implementations for the same boolean
specification were studied in \cite{Caspi02, Bloem09}.
Our approach is fundamentally different as we consider quantitative property values.

Quantitative properties, a.k.a.\ quantitative languages, were defined in \cite{Chatterjee10}.
Although such properties have been studied much
in the context of probabilistic model checking \cite{Kwiatkowska07},
decision problems in verification \cite{Chatterjee10},
and games with quantitative objectives \cite{Bloem18,Bouyer18},
in runtime verification, we observe a gap.
While some formalisms for monitoring certain quantitative properties have been proposed
\cite{Ferrere20, Chatterjee16, Paul17},
to the best of our knowledge, our work is the first general semantic framework that explores what it means to monitor
and approximate generic quantitative properties of traces.
We believe that such a framework is needed for the systematic study of precision-resource trade-offs
in runtime verification. 
See \cite{Calinescu12} for a discussion of why quantitative verification at runtime is needed for self-adapting
systems, and \cite{Fulton18, Henzinger20b} for monitoring neural networks.

\section{Definitions} \label{sect:defn}

Let $\Sigma = \{a,b,\ldots\}$ be a finite alphabet of observations.
A \emph{trace} is a finite or infinite sequence of observations, denoted by $s,r,t \in \Sigma^*$ or $f,g,h \in \Sigma^\omega$, respectively.
For traces $w \in \Sigma^* \cup \Sigma^\omega$ and $s \in \Sigma^*$,
we write $s \prec w$ (resp.\ $s \preceq w$) iff
$s$ is a strict (resp.\ non-strict) finite prefix of $w$,
and denote by $\pref{w}$ the set of finite prefixes of $w$.

\subsection{Quantitative properties and verdict functions}

A \emph{boolean property} $P \subseteq \Sigma^\omega$ is a set of infinite traces, and
a \emph{value domain} $\D$ is a partially ordered set.
Unless otherwise stated, we assume that $\D$ is a complete lattice and, whenever appropriate,
we write 0, $-\infty$, and $\infty$ instead of $\bot$ and $\top$ for the least and greatest elements.
A \emph{quantitative property} $p: \Sigma^\omega \to \D$ is a function on infinite traces.
A \emph{verdict} $v: \Sigma^* \to \D$ is a function on finite traces such that
for all infinite traces $f \in \Sigma^\omega$, the set $\{v(s) : s\in \pref{f}\}$ of verdict values
over all prefixes of $f$ has a supremum (least upper bound) and an infimum (greatest lower bound).
If $\D$ is a complete lattice, then these limits always exist.
For an infinite trace $f \in \Sigma^\omega$, we write $v(f) = (v(s_i))_{i \in \N}$ for the infinite \emph{verdict sequence}
over the prefixes $s_i\prec f$ of increasing length $i$. 
We use the $\limsup$ or $\liminf$ of a verdict sequence $v(f)$ to represent the
``estimate'' that the verdict function $v$ provides for a quantitative property value $p(f)$
on the infinite trace $f$.

\begin{definition}
	Let $p$ be a quantitative property and $f \in \Sigma^\omega$ an infinite trace.
	A verdict function $v$ \emph{approximates $p$ on $f$ from below} (resp.\ \emph{above}) iff
	$\limsup v(f) \leq p(f)$ (resp.\ $p(f) \leq \liminf v(f)$).
	Moreover, $v$ \emph{monitors $p$ on $f$ from below} (resp.\ \emph{above}) iff
	the equality
	holds.
\end{definition}

\subsection{Universal, existential, and approximate monitorability}

We define three modalities of quantitative monitorability.

\begin{definition} \label{defn:univmon}
	A quantitative property $p$ is \emph{universally monitorable from below} (resp.\ \emph{above}) iff
	there exists a verdict function $v$ such that for every $f \in \Sigma^\omega$ we have that
	$v$ monitors $p$ on $f$ from below (resp.\ above).
\end{definition}

\begin{definition} \label{defn:exismon}
	A quantitative property $p$ is \emph{existentially monitorable from below} (resp.\ \emph{above}) iff
	there exists a verdict function $v$ such that
	(i) for every $f \in \Sigma^\omega$ we have that $v$ approximates $p$ on $f$ from below
	(resp.\ above), and
	(ii) for every $s \in \Sigma^*$ there exists $f \in \Sigma^\omega$ such that
	$v$ monitors $p$ on $sf$ from below (resp.\ above).
\end{definition}

\begin{definition}
	A quantitative property $p$ is \emph{approximately monitorable from below} (resp.\ \emph{above}) iff
	there exists a verdict function $v$ such that for every $f \in \Sigma^\omega$ we have that
	$v$ approximates $p$ on $f$ from below (resp.\ above).
\end{definition}

Observe that every property is trivially approximately monitorable from below or above.
We demonstrate the definitions in the example below.

\begin{example} \label{ex:defns}
	Let $\Sigma = \{\req_1,\ack_1,\req_2,\ack_2,\other\}$ and 
	$\D$ be the nonnegative integers with $\infty$.
	Consider the maximal response-time properties $p_1$ and $p_2$ over $(\req_1,\ack_1)$ and $(\req_2,\ack_2)$ pairs,
	respectively.
	For every $f \in \Sigma^\omega$, let $p(f) = \max(p_1(f),p_2(f))$.
	Consider the verdict $v_1$ that counts both response times and outputs the maximum of the two,
	the verdict $v_2$ that counts and computes the maximum only for the $(\req_1,\ack_1)$ pair, and
	the constant verdict $v_3$ that always outputs 0.
	Evidently, $v_1$ universally monitors $p$ from below, and $v_3$ approximately monitors $p$ from below.
	Moreover, $v_2$ existentially monitors $p$ from below because the true maximum can only be greater, and
	we can extend every finite trace
	$s \in \Sigma^*$ with $f = \req_1 \cdot \other^\omega$ such that $\limsup v_2(sf) = p(sf) = \infty$.
\end{example}

\subsection{Monotonic verdict functions}

Of particular interest are \emph{monotonic} verdict functions, because the ``estimates'' they provide
for a quantitative property value are always conservative (below or above) and can improve in quality over time.
On the other hand, some properties, such as average response time, inherently require non-monotonic
verdict functions for universal monitoring.

\begin{definition}
	A verdict function $v$ is \emph{monotonically increasing} (resp.\ \emph{decreasing}) iff
	for every $s,t \in \Sigma^*$ we have 
	$s \prec t$ implies $v(s) \leq v(t)$ (resp.\ $v(s) \geq v(t)$).
	Moreover, $v$ is \emph{monotonic} iff
	it is either monotonically increasing or monotonically decreasing.
	If $v$ is monotonic or non-monotonic, then it is \emph{unrestricted}.
\end{definition}

If the value domain $\D$ has a least and a greatest element,
every monotonic verdict $v$ that universally monitors a property $p$ from below also universally monitors $p$ from above.
Therefore, in such cases, we say that $v$ \emph{universally monitors} $p$.
In Example~\ref{ex:defns} above, the verdict $v_1$ is monotonically increasing and thus universally monitors $p$.
Let $v_4$ be such that $v_4(s) = \infty$ if $s$ contains a request that is not
acknowledged, and $v_4(s) = v_1(s)$ otherwise.
The verdict $v_4$ is not monotonic, but it universally monitors $p$ from above.

\subsection{Comparison of verdict functions}

Quantitative monitoring provides a natural notion of precision for verdict functions.

\begin{definition} \label{defn:precise}
	Let $p$ be a quantitative property that is (universally, existentially, or approximately) monitorable
	from below (resp.\ above) by the verdict functions $v_1$ and $v_2$.
	The verdict $v_1$ is \emph{more precise} than the verdict $v_2$ iff for every $f \in \Sigma^\omega$
	we have $\limsup v_2(f) \leq \limsup v_1(f)$ (resp.\ $\liminf v_1(f) \leq \liminf v_2(f)$) and
	there exists $g \in \Sigma^\omega$ such that $\limsup v_2(g) < \limsup v_1(g)$
	(resp.\ $\liminf v_1(g) < \liminf v_2(g)$).
	Moreover, $v_1$ and $v_2$ are \emph{equally precise} iff for every $f \in \Sigma^\omega$
	we have $\limsup v_2(f) = \limsup v_1(f)$ (resp.\ $\liminf v_1(f) = \liminf v_2(f)$).
\end{definition}

Note that for a quantitative property $p$, if the verdict functions $v_1$ and $v_2$ universally monitor $p$ both
from below or from above, then $v_1$ and $v_2$ are equally precise.
Two monotonically increasing or monotonically decreasing verdict functions can be compared not only according to
their precision but also according to their \emph{speed}, that is, how quickly they approach the property value.
This will be important if monitors have limited resources and their outputs are delayed, i.e.,
they affect not the current but a future verdict value.

\section{Monitorable Quantitative Properties} \label{sect:quant}
\subsection{Examples}

We provide several examples of quantitative properties and investigate their monitorability.

\begin{example} [Maximal response time] \label{ex:mrt}
	Let $\Sigma = \{\req,\ack,\\\other\}$ and $\D = \N \cup \{\infty\}$.
	Let $\mrt : \Sigma^* \to \D$ be such that $\mrt(s) = \infty$ if, in $s$, a $\req$ is followed by another $\req$ without an $\ack$ in between, it equals the maximal number $m_s$ of observations between matching $(\req,\ack)$ pairs if there is no pending request in $s$, and otherwise it equals $\max(m_s,n)$ where $n$ is the current response time.
	For every $f \in \Sigma^\omega$, let us denote by $\mrt(f)$ the infinite sequence $(\mrt(s_i))_{i \in \N}$ over the prefixes $s_i \prec f$ of increasing length $i$.	
	Consider the property $p(f) = \lim \mrt(f)$ that specifies the maximal response time of a server that can process
	at most one request at a time.
	To monitor $p$, we use $\mrt$ as the verdict, i.e., we let $v(s) = \mrt(s)$ for every $s \in \Sigma^*$.
	Observe that $\mrt$ is monotonically increasing, and the construction yields $\lim v(f) = p(f)$ for every $f \in \Sigma^\omega$.
	Therefore, the verdict $v$ universally monitors $p$.
\end{example}

The maximal response-time property of Example~\ref{ex:mrt} is evidently
infinite-state because it requires counting up to an arbitrarily large integer.
However, there are finite-state approximations that improve in precision with every additional state. 
We say that a finite-state machine \emph{generates} a verdict function
iff, on every finite trace, the machine's output equals the verdict value,
where an output is a mapping from the set of states to the value domain.

\begin{example}	[Approximate monitoring of maximal response time]
	Consider the maximal response-time property $p$ from Example~\ref{ex:mrt}.
	Let $M_k$ be a finite-state machine with $k$ states,
	and let $v_k$ be the verdict generated by $M_k$.
	For every $k \in \N$, the best the verdict $v_k$ can do is to approximately monitor $p$ from below,
	because it can only count up to some integer $m \leq k$.
	Suppose that we are given $k+1$ states.
	We can use the additional state to construct a machine $M_{k+1}$ from $M_k$ to generate
	a more precise verdict $v_{k+1}$ as follows.
	We add the appropriate transitions from the states that have the output value of $m$ to the new state, which is assigned the output $m+1$.
	With the additional transitions, the machine $M_{k+1}$ can continue counting for one more
	step after reading a trace in which the current maximum is $m$.
	Therefore, $v_{k+1}$ is more precise than $v_k$.	
\end{example}

Next, we define the average response-time property and present two verdict functions
that illustrate another kind of precision-resource trade-off for monitors.

\begin{example} [Average response time] \label{ex:art}
	Let $\Sigma = \{\req,\ack,\\\other\}$ and $\D = \R \cup \{\infty\}$.
	Let $\art : \Sigma^* \to \D$ be such that $\art(s) = \infty$ if $s$ contains
	a $\req$ followed by another $\req$ without an $\ack$ in between,
	it equals the average number of observations between matching $(\req,\ack)$ pairs if there is no pending $\req$ in $s$,
	and otherwise it equals $\frac{n \cdot x_n + m}{n+1}$,
	where $n$ is the number of acknowledged requests, $x_n$ is the average response time for the
	first $n$ requests, and $m$ is the number of observations since the last $\req$.
	For every $f \in \Sigma^\omega$,	
	let  $\art(f) = (\art(s_i))_{i \in \N}$ over the prefixes $s_i \prec f$ of increasing length $i$.	
	Now, define $\limavg(f) = \liminf \art(f)$ for every $f \in \Sigma^\omega$ \cite{Chatterjee10},
	and let $p$ be the quantitative property such that $p(f) = \limavg(f)$.
	In other words, $p$ specifies the average response time of a server that can process at most one
	request at a time.
	To monitor $p$, we can use the function $\art$ as a verdict, i.e.,
	let $v$ be such that $v(s) = \art(s)$ for all $s \in \Sigma^*$.
	Intuitively, the moving average approaches to the property value as $v$ observes longer prefixes.
	Therefore, by construction, for every $f \in \Sigma^\omega$, we have $\liminf v(f) = p(f)$,
	which means that $p$ is universally monitorable from above by an unrestricted verdict function.
	
	Alternatively, we can use the monotonic verdict function $v'$ that universally monitors
	the maximal response-time property in Example~\ref{ex:mrt}.
	Observe that $v'$ existentially monitors $p$ from above because
	(i) the maximal response time of a trace is greater than its average response time, and
	(ii) for every finite prefix $s$ there is an extension $f$ that contains a request that is not acknowledged,
	which yields $\lim v'(sf) = p(sf) = \infty$.
\end{example}

Boolean safety and co-safety properties can be embedded in a quantitative setting
by considering their \emph{discounted} versions \cite{deAlfaro03}.
We show that discounted safety and co-safety properties are universally monitorable.

\begin{example} [Discounted safety and co-safety]
	Let $p$ be a discounted safety property, that is,
	$p(f) = 1$ if $f$ does not violate the given safety property, and $p(f) = 1 - \frac{1}{2^n}$ if the shortest violating prefix of $f$ has length $n$.
	Similarly, let $q$ be a discounted co-safety property:
	$q(f) = 0$ if $f$ does not satisfy the given co-safety property, and  $q(f) = \frac{1}{2^n}$ if the shortest satisfying prefix of $f$ has length $n$.
	To monitor these two properties, we use verdict functions $v_p$ and $v_q$ that work
	similarly as $p$ and $q$ on finite traces, that is,
	$v_p(s) = 1$ if $s$ is not violating for the given safety property, and $v_p(s) = 1 - \frac{1}{2^n}$ if the shortest violating prefix of $s$ has length $n$; and similarly for $v_q$.
	One can easily verify that $p$ and $q$ are universally monitorable by $v_p$ and $v_q$, respectively.
\end{example}

Finally, we look at another classical value function for quantitative properties,
often called \emph{energy} values \cite{Chatterjee10}.

\begin{example}	[Energy]
	Let $A = (Q, \Sigma, \delta, q_0, w)$ be a deterministic finite automaton with weighted transitions,
	where $Q$ is a set of states, $\Sigma$ is an alphabet, $\delta \subseteq Q \times \Sigma \times Q$ is a set of transitions,
	$q_0$ is the initial state, and $w : \delta \to \Z$ is a weight function.
	Let $s = \sigma_1 \ldots \sigma_n$ be a finite trace of length $n$,
	and let $q_0 \ldots q_n$ be the corresponding run of $A$.
	We define $A(s) = \sum_{i=1}^{n} w(q_{i-1}, \sigma_i, q_i)$, where $A(\varepsilon) = 0$.
	Consider the value domain $\D = \Z \cup \{\infty\}$.
	Let $p$ be a property such that, for every $f \in \Sigma^\omega$, we have $p(f) = k$ where $k$ is the smallest nonnegative value that satisfies $A(s) + k \geq 0$ for every finite prefix $s \prec f$.
	To monitor $p$, we construct the following verdict function:
	given  $s \in \Sigma^*$, let $v(s) = -\min\{A(r) \mid r \in \pref{s}\}$.
	Note that $v$ is monotonically increasing.
	On an infinite trace $f \in \Sigma^\omega$, if $v(f)$ approaches $\infty$,
	then $f$ yields a negative-weight loop on $A$, therefore
	$p(f) = \infty$.
	Otherwise, if $v(f)$ converges to a finite value, then it is equal to $p(f)$ by construction,
	which means that $v$ universally monitors $p$.
\end{example}

\subsection{Closure under operations on the value domain}

Let $\D$ be a value domain and $p : \Sigma^\omega \to \D$ be a quantitative property.
We define the \emph{inverse of $\D$}, denoted $\D_{\textit{inv}}$,
as the value domain that contains the same elements as $\D$ with reversed ordering.
Moreover, we define the \emph{complement of $p$} as $\overbar{p} : \Sigma^\omega \to \D_{\textit{inv}}$
such that $\overbar{p}(f) = p(f)$.

\begin{proposition}
	A quantitative property $p$ is universally (resp.\ existentially; approximately)
	monitorable from below iff $\overbar{p}$ is universally (resp.\ existentially; approximately)
	monitorable from above.
\end{proposition}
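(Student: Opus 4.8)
The plan is to show that approximation and monitoring from below with respect to $p$ and $\D$ correspond exactly to approximation and monitoring from above with respect to $\overbar{p}$ and $\D_{\textit{inv}}$, and then lift this correspondence through the three monitorability definitions. The key observation is that reversing the order of $\D$ swaps the roles of suprema and infima, and hence swaps $\limsup$ and $\liminf$: if $v$ is a verdict function and we regard its values in $\D_{\textit{inv}}$ instead of $\D$, then the supremum of a set becomes its infimum and vice versa, so the limit superior of the verdict sequence $v(f)$ computed in $\D_{\textit{inv}}$ equals the limit inferior computed in $\D$, and symmetrically. Since $\overbar{p}(f) = p(f)$ as elements, and $\leq$ in $\D_{\textit{inv}}$ is $\geq$ in $\D$, the defining inequality $\limsup v(f) \leq p(f)$ for approximating from below in $\D$ is literally the same statement as $\overbar{p}(f) \leq \liminf v(f)$ read in $\D_{\textit{inv}}$, which is exactly the condition for approximating $\overbar{p}$ from above.

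First I would make precise the duality lemma for the limits: for any verdict function $v$ and any $f \in \Sigma^\omega$, the value $\limsup v(f)$ evaluated in $\D$ coincides with $\liminf v(f)$ evaluated in $\D_{\textit{inv}}$ (and conversely). This follows directly from the fact that $\sup$ and $\inf$ are interchanged under order reversal, together with the definition of $\limsup$ and $\liminf$ as iterated suprema and infima of the tails of the verdict sequence. I would then record that a single verdict function $v$ serves simultaneously as a verdict for $p$ over $\D$ and for $\overbar{p}$ over $\D_{\textit{inv}}$, since the underlying function $\Sigma^* \to \D$ does not change—only the order we interpret it with changes—and the existence of suprema and infima required by the definition of a verdict is preserved because $\D_{\textit{inv}}$ is a complete lattice whenever $\D$ is.

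Next I would translate the three modalities. For approximate monitorability, the equivalence is immediate from the duality lemma: $v$ approximates $p$ from below on every $f$ if and only if the same $v$ approximates $\overbar{p}$ from above on every $f$. For universal monitorability, I would note that the monitoring condition is the equality $\limsup v(f) = p(f)$ in $\D$, which under the duality lemma becomes $\liminf v(f) = \overbar{p}(f)$ in $\D_{\textit{inv}}$, i.e.\ $v$ monitors $\overbar{p}$ from above on $f$; quantifying universally over $f$ and over the choice of $v$ gives the equivalence. For existential monitorability, the same per-trace and per-prefix translations apply to conditions (i) and (ii) of Definition~\ref{defn:exismon}, since the prefix-extension witness $f$ for $s$ is the same trace in both readings.

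Because the whole argument is a symmetric relabeling, the converse directions come for free: applying the forward direction to $\overbar{p}$ and using $\overbar{\overbar{p}} = p$ together with $(\D_{\textit{inv}})_{\textit{inv}} = \D$ closes each biconditional. I do not expect a genuine obstacle here; the only point requiring care is the bookkeeping in the duality lemma, namely verifying cleanly that order reversal exchanges $\limsup$ and $\liminf$ and that the verdict well-definedness condition (existence of the relevant sup and inf) transfers to $\D_{\textit{inv}}$. Once that lemma is stated, the three modalities fall out by routine substitution.
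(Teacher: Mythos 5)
Your proof is correct and takes exactly the approach the paper intends: the paper states this proposition without any proof, treating it as immediate from the order-duality you spell out (reversing the order of $\D$ swaps $\sup$ and $\inf$, hence $\limsup$ and $\liminf$, so the very same verdict function $v$ witnesses monitorability of $p$ from below over $\D$ and of $\overbar{p}$ from above over $\D_{\textit{inv}}$). Your explicit duality lemma, the check that the verdict well-definedness condition transfers, and the involution argument $\overbar{\overbar{p}} = p$ for the converse are precisely the bookkeeping the paper leaves implicit.
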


If the value domain $\D$ is a lattice, then monitorability from below is preserved by the least upper bound
(written $\max$) and from above by greatest lower bound (written $\min$).
For all quantitative properties $p$ and $q$ on $\D$, and all infinite traces $f\in\Sigma^\omega$, let $\max(p,q)(f)=\max(p(f),q(f))$ and $\min(p,q)(f)=\min(p(f),q(f))$.

\begin{proposition}
	For all quantitative properties $p$ and $q$ on a lattice,
	if $p$ and $q$ are universally (resp.\  existentially; approximately) monitorable from below
	(resp.\ above),
	then the property $\max(p,q)$ (resp.\ $\min(p,q)$) is also universally (resp.\ existentially; approximately)
	monitorable from below (resp.\ above).
\end{proposition}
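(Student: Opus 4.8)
The plan is to reduce the ``$\min$ from above'' half of the statement to the ``$\max$ from below'' half, and then to treat the latter for all three modalities with a single verdict construction. For the reduction, recall from the preceding proposition on complements that a property is monitorable from below (in any of the three senses) on $\D$ iff its complement is monitorable from above on $\D_{\textit{inv}}$. Since joins in $\D$ are meets in $\D_{\textit{inv}}$, we have $\overbar{\min(p,q)} = \max(\overbar p, \overbar q)$ as properties into $\D_{\textit{inv}}$. It therefore suffices to prove, on an arbitrary value domain, that $\max$ preserves below-monitorability; applying that statement to $\overbar p$ and $\overbar q$ on $\D_{\textit{inv}}$ then yields the $\min$/above half on $\D$. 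Fixing verdicts $v_p$ and $v_q$ that witness the relevant below-monitorability of $p$ and $q$, the candidate verdict is $v = \max(v_p, v_q)$, i.e.\ $v(s) = v_p(s) \vee v_q(s)$; since $\D$ is a complete lattice, the suprema and infima of $v$ along every trace exist, so $v$ is a legitimate verdict function.

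The technical workhorse is the identity, for each $f \in \Sigma^\omega$,
\[
\limsup v(f) \;=\; \max\!\big(\limsup v_p(f),\, \limsup v_q(f)\big).
\]
Writing $A_n = \bigvee_{k \ge n} v_p(s_k)$ and $B_n = \bigvee_{k \ge n} v_q(s_k)$ along the prefixes $s_k \prec f$, the inequality $\ge$ is automatic in any complete lattice, since $(\bigwedge_m A_m) \vee (\bigwedge_m B_m) \le A_n \vee B_n$ for every $n$ and hence lies below the meet over $n$. The reverse inequality $\le$ is the only nontrivial point: it amounts to $\bigwedge_n (A_n \vee B_n) \le (\bigwedge_n A_n) \vee (\bigwedge_n B_n)$ for the two nonincreasing sequences $(A_n),(B_n)$, which holds whenever $\D$ is completely distributive, and in particular for all the totally ordered value domains considered in this paper, where it is the familiar fact that $\limsup$ distributes over a binary $\max$. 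Granting the identity, the approximate and universal cases are immediate: if $\limsup v_p(f) \le p(f)$ and $\limsup v_q(f) \le q(f)$ for all $f$, then $\limsup v(f) \le \max(p(f),q(f))$, so $v$ approximately monitors $\max(p,q)$ from below; and if both inequalities are equalities, then $\limsup v(f) = \max(p(f),q(f))$, so $v$ universally monitors it.

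The existential case is where I expect the real difficulty. Condition~(i) of existential monitorability holds exactly as in the approximate case. The obstacle is condition~(ii): for every $s \in \Sigma^*$ I must exhibit a \emph{single} continuation $f$ with $\limsup v(sf) = \max(p(sf),q(sf))$, and it is not enough to reuse the two given witnesses separately, because the $p$-witness and the $q$-witness are generally distinct traces---on a $p$-witness $sf_p$ the verdict $v_q$ may badly underestimate a large true value $q(sf_p)$, so the maximum is not monitored there (and symmetrically). What is actually required is a continuation on which the \emph{dominating} property is monitored exactly. The plan is to build such an $f$ from $s$ by a staged interleaving (diagonal) construction: extend the current finite prefix in alternating phases that drive the running values of $v_p$ toward $p$ and of $v_q$ toward $q$, invoking the existential witness of the respective property at the current prefix in each phase but committing only to finite blocks, so that in the limit at least the dominating one of the two becomes exact while condition~(i) prevents the other from overshooting. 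Making the $\limsup$ bookkeeping of this interleaving precise---so that the limit trace is genuinely dominant-exact rather than merely approximately so---is the step I expect to be the crux of the proof.
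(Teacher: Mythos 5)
Your treatment of the approximate and universal cases, and your reduction of the $\min$/above half to the $\max$/below half via the complement proposition, is essentially the paper's own proof: the paper likewise takes $\max(v_p,v_q)$ as the verdict and invokes the identity $\limsup \max(v_p(f),v_q(f)) = \max(\limsup v_p(f), \limsup v_q(f))$. You are in fact more careful than the paper on this point: the paper justifies the identity merely by the existence of a greatest element, whereas, as you note, the nontrivial inequality $\limsup\max \leq \max\limsup$ needs an extra order-theoretic hypothesis (complete distributivity, or total order), which does hold for the chain and product-of-chain domains the paper actually uses.

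The genuine gap is the existential case: your proposal does not prove it. You correctly isolate the obstacle --- condition (ii) demands a \emph{single} continuation on which $\max(v_p,v_q)$ is exact for $\max(p,q)$, and the $p$-witness $f_p$ only works when $q(sf_p) \leq p(sf_p)$ or $v_q$ happens to be exact on $sf_p$ (symmetrically for $f_q$) --- but the staged interleaving you offer in its place is not carried out, and as described it has no clear route to success. Exactness of $v_p$ on a witness is a property of the tail of one specific infinite continuation; once you commit to only a finite block of $f_p$ and then switch to chasing a $q$-witness from the new prefix, the trace you converge to is in general a witness for \emph{neither} property: nothing relates $p$, $q$, $\limsup v_p$, or $\limsup v_q$ on the limit trace to the values tracked during the finite stages, and condition (i) supplies only upper bounds, so the dominating property can end up strictly underestimated on the limit trace. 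So the bookkeeping you flag as ``the crux'' is not a technicality to be filled in later; it is the whole difficulty, and the diagonal construction does not resolve it. For what it is worth, the paper's own proof disposes of this case in one sentence (``can be proved similarly by using the fact that the domain is a lattice''), i.e., it implicitly relies on exactly the witness-reuse step whose failure mode you identified; so where your write-up is rigorous it coincides with the paper, but the existential clause of the statement remains unproven in your proposal.
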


\begin{proof}
	Let $v_p$ and $v_q$ be two verdict functions that universally monitor $p$ and $q$ from below.
	Then, we have $\max(p(f),q(f)) = \max(\limsup v_p(f), \limsup v_q(f))$ for every $f \in \Sigma^\omega$.
	Since we assume that the domain contains a greatest element, for every $f \in \Sigma^\omega$, we also have $\max(\limsup v_p(f), \limsup v_q(f))$ equals $\limsup(\max(v_p(f),v_q(f)))$.
	Therefore, we can use $\max(v_p,v_q)$ as a verdict function to universally monitor $\max(p,q)$ from below the same way $v_p$ and $v_q$ monitor $p$ and $q$.
	The case for $\min$ is symmetric, and the cases for existential and approximate monitoring can be proved similarly by using the fact that the domain is a lattice.
\end{proof}

\begin{proposition}
	For all quantitative properties $p$ and $q$ on a lattice, if $p$ and $q$ are (universally, existentially, or approximately) monitorable from below (resp. above), the property $\min(p,q)$ (resp. $\max(p,q)$) is approximately monitorable from below (resp. above).
\end{proposition}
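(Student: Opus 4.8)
The plan is to build a verdict for $\min(p,q)$ by taking the pointwise minimum of verdicts for $p$ and $q$, and to show that this weaker combination still approximates from below, even though it generally loses the stronger modality. First I would note that all three hypotheses collapse to the same usable consequence: whether $p$ and $q$ are universally, existentially, or approximately monitorable from below, in each case there exist verdict functions $v_p$ and $v_q$ with $\limsup v_p(f) \leq p(f)$ and $\limsup v_q(f) \leq q(f)$ for every $f \in \Sigma^\omega$. For approximate monitoring this is the definition; for the universal and existential modalities it is part (the equality, resp.\ condition (i)) of their definitions. So the modality of the hypothesis is irrelevant beyond yielding this one-sided inequality, and $p$ and $q$ need not even share the same modality.

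Next I would define the candidate verdict $v = \min(v_p, v_q)$, which is well defined since $\D$ is a lattice, and whose verdict sequences admit suprema and infima since $\D$ is a complete lattice. The key step is a one-sided limit computation: because $\min(v_p(s), v_q(s)) \leq v_p(s)$ and $\min(v_p(s), v_q(s)) \leq v_q(s)$ for every prefix $s$, taking $\limsup$ over the prefixes of $f$ gives $\limsup v(f) \leq \limsup v_p(f)$ and $\limsup v(f) \leq \limsup v_q(f)$, hence
\[
\limsup v(f) \;\leq\; \min\bigl(\limsup v_p(f),\, \limsup v_q(f)\bigr) \;\leq\; \min\bigl(p(f),\, q(f)\bigr).
\]
This is exactly the condition for $v$ to approximate $\min(p,q)$ from below on every $f$, so $\min(p,q)$ is approximately monitorable from below. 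The dual ``from above'' claim is symmetric: I would take $v = \max(v_p, v_q)$ and use the reverse inequality $\liminf v(f) \geq \max(\liminf v_p(f), \liminf v_q(f)) \geq \max(p(f), q(f))$.

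The hard part is conceptual rather than computational: recognizing why the conclusion must be phrased with the \emph{approximate} modality even when the hypotheses are universal. The inequality $\limsup \min \leq \min \limsup$ can be strict, so $\min(v_p, v_q)$ need not converge to $\min(p,q)$ even if $v_p$ and $v_q$ are exact monitors; the drop in modality is forced by the direction of this inequality and cannot in general be recovered. The corresponding obstruction for existential monitoring is that a witnessing continuation for $p$ need not be a witnessing continuation for $q$, so the ``monitors on some extension'' clause is not preserved by $\min$. Thus the proof's content is less in the calculation than in identifying that the $\limsup$ of a minimum delivers precisely the approximate inequality and nothing stronger.
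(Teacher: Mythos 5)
Your proof is correct and follows essentially the same route as the paper's: both take the pointwise $\min(v_p,v_q)$ as the verdict and rely on the inequality $\limsup \min(v_p(f),v_q(f)) \leq \min(\limsup v_p(f), \limsup v_q(f)) \leq \min(p(f),q(f))$, with the dual argument for $\max$. Your write-up is somewhat more explicit than the paper's (justifying the key inequality and spelling out why all three hypothesis modalities reduce to the same one-sided bound), but the underlying argument is identical.
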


\begin{proof}
	Let $v_p$ and $v_q$ be verdict functions that monitor $p$ and $q$ from below, therefore for every infinite trace $f \in \Sigma^\omega$ we have $\limsup v_p(f) \leq p(f)$ and $\limsup v_q(f) \leq q(f)$.
	Because $\limsup \min(v_p(f),v_q(f)) \leq \min(\limsup v_p(f), \limsup v_q(f))$ for every $f \in \Sigma^\omega$, we can use $\min(v_p,v_q)$ as a verdict function to approximately monitor $\min(p,q)$ from below.
	The case for $\max$ is dual.
\end{proof}

If $\D$ is a numerical value domain with addition and multiplication, such as the reals or integers, or their
nonnegative subsets, then not all modalities of monitorability are preserved under these operations.
For all quantitative properties $p$ and $q$ on $\D$, and all infinite traces $f\in\Sigma^\omega$,
let $(p+q)(f)=p(f)+q(f)$ and $(p \cdot q)(f) =p(f) \cdot q(f)$.
Since $\limsup$ is subadditive and submultiplicative while $\liminf$ superadditive and supermultiplicative, one can easily conclude the following.

\begin{proposition}
	For all quantitative properties $p$ and $q$ on a numerical value domain,
	if $p$ and $q$ are (universally, existentially, or approximately) monitorable from below
	(resp.\ above),
	then $p+q$ and $p \cdot q$ are approximately monitorable from below
	(resp.\ above).
\end{proposition}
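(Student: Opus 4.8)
The plan is to reuse the verdict functions witnessing the hypotheses, combine them pointwise, and verify the approximation inequality via the stated sub/superadditivity (and sub/supermultiplicativity) of $\limsup$ and $\liminf$. First I would observe that each of the three modalities, when instantiated ``from below,'' supplies verdict functions $v_p$ and $v_q$ with $\limsup v_p(f) \leq p(f)$ and $\limsup v_q(f) \leq q(f)$ for every $f \in \Sigma^\omega$: for approximate monitorability this is the definition, for existential monitorability it is clause (i), and for universal monitorability it follows from $\limsup v_p(f) = p(f)$ (and likewise for $q$). Thus, regardless of the modality, we may extract the same two approximation guarantees, which is exactly why the conclusion need only assert approximate monitorability.

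Next I would take the pointwise sum $v_p + v_q$, defined by $(v_p + v_q)(s) = v_p(s) + v_q(s)$, as the candidate verdict for $p + q$. Since $\limsup$ is subadditive, for every $f \in \Sigma^\omega$ we have
\[
\limsup (v_p + v_q)(f) \leq \limsup v_p(f) + \limsup v_q(f) \leq p(f) + q(f) = (p+q)(f),
\]
so $v_p + v_q$ approximately monitors $p+q$ from below. The product is handled identically with the pointwise product $v_p \cdot v_q$ and submultiplicativity of $\limsup$. The ``from above'' direction is dual: I would combine the witnesses supplied by the $\liminf$-based hypotheses and invoke superadditivity and supermultiplicativity of $\liminf$ to obtain $(p+q)(f) \leq \liminf (v_p + v_q)(f)$ and the analogous bound for the product.

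Two points deserve emphasis. First, the conclusion genuinely weakens to approximate monitorability even under universal or existential hypotheses: the displayed inequalities can be strict, so even when $v_p$ and $v_q$ monitor $p$ and $q$ exactly, there is no reason for the $\limsup$ of $v_p + v_q$ to hit $p(f)+q(f)$. Second, the only real care is needed in the multiplicative case, where submultiplicativity of $\limsup$ (and supermultiplicativity of $\liminf$) relies on nonnegativity of the combined verdict sequences---indeed for signed sequences submultiplicativity fails, as alternating sequences show. This is the step I expect to be the sole genuine obstacle, and it is cleanly available on the nonnegative numerical domains, so I would phrase the multiplicative argument over those domains (noting the sign restriction needed otherwise). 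Everything else is a routine transfer of these scalar inequalities through the definitions.
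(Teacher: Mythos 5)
Your proof is correct and follows essentially the same route as the paper, which justifies the proposition in one line by the subadditivity/submultiplicativity of $\limsup$ and the superadditivity/supermultiplicativity of $\liminf$ applied to the pointwise-combined verdicts. Your added caveat that the multiplicative case needs nonnegativity (both for submultiplicativity and for multiplying the two inequalities) is a legitimate refinement that the paper glosses over by listing nonnegative subsets among its intended numerical domains.
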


However, monitorability is preserved under any monotonically increasing continuous function on value domains that are totally ordered.

\begin{proposition}
	Let $\D$ be a totally-ordered value domain.
	Consider a quantitative property $p : \Sigma^\omega \to \D$ and a monotonically increasing continuous function $\phi : \D \to \D$.
	If $p$ is (universally, existentially, or approximately) monitorable from below (resp. above),
	then so is $\phi(p)$.
\end{proposition}

\subsection{Continuous quantitative properties}

For this section, we assume that $\D$ is a complete lattice and
define \emph{continuous} and \emph{co-continuous} properties on $\D$.
Let $p$ be a quantitative property and, for every $s \in \Sigma^*$, let $\nu_p(s) = \sup\{p(sf) \mid f \in \Sigma^\omega\}$.
For $f \in \Sigma^\omega$,
the function $\nu_p$ generates an infinite sequence
$\nu_p(f) = (\nu_p(s_i))_{i \in \N}$ over the prefixes $s_i\prec f$ of increasing length $i$. 
Similarly, let $\mu_p(s) = \inf\{p(sf) \mid f \in \Sigma^\omega\}$ and
extend it to generate infinite sequences on infinite traces.

\begin{definition} [\cite{Weihrauch87}] \label{defn:cont}
	A property $p$ is \emph{continuous} iff
	for every infinite trace $f \in \Sigma^\omega$, we have $p(f) = \lim \nu_p(f)$.
	Moreover, $p$ is \emph{co-continuous} iff $\overbar{p}$ continuous, or equivalently, iff $p(f) = \lim \mu_p(f)$
	for every $f \in \Sigma^\omega$.
\end{definition}

Intuitively, the continuous and co-continuous properties constitute well-behaved sets of properties in the sense that,
to monitor them, there is no need for speculation.
For example, considering a continuous property, the least upper bound can only decrease after reading longer prefixes;
therefore, a verdict function monitoring such a property can simultaneously be conservative and precise.
We make this connection more explicit and show that continuous and co-continuous properties satisfy the desirable property
of being universally monitorable by monotonic verdict functions.

\begin{theorem} \label{thm:continuous}
	A quantitative property $p$ is continuous iff
	$\overbar{p}$ is universally monitorable by a monotonically increasing verdict function.
\end{theorem}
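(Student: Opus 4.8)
The plan is to use the function $\nu_p$ itself as the canonical verdict function witnessing monitorability of $\overbar{p}$. The first observation is that $\nu_p$ is monotonically decreasing with respect to the order on $\D$: if $s \prec t$, then every continuation $tf$ of $t$ is also a continuation of $s$, so $\{p(tf) \mid f \in \Sigma^\omega\} \subseteq \{p(sg) \mid g \in \Sigma^\omega\}$ and hence $\nu_p(t) \leq \nu_p(s)$. Reading this order in the inverse domain $\D_{\textit{inv}}$, the same $\nu_p$ is monotonically increasing. Since $\D$ is a complete lattice and each verdict sequence $\nu_p(f)$ is monotone, its $\limsup$, $\liminf$, and $\lim$ all coincide, so I may freely speak of $\lim \nu_p(f)$ and appeal to the earlier remark that a monotonic verdict monitoring from below also monitors from above.

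For the forward direction, assume $p$ is continuous, so $p(f) = \lim \nu_p(f)$ in $\D$ for every $f$. I take $v = \nu_p$ viewed over $\D_{\textit{inv}}$. The limit of the $\D$-decreasing sequence $\nu_p(f)$ is its infimum in $\D$, which is exactly the supremum in $\D_{\textit{inv}}$, i.e.\ the limit of the $\D_{\textit{inv}}$-increasing sequence $v(f)$. Thus $\lim v(f) = p(f) = \overbar{p}(f)$ for every $f$, which shows that the monotonically increasing verdict $v$ universally monitors $\overbar{p}$.

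For the converse, suppose a monotonically increasing (over $\D_{\textit{inv}}$, hence $\D$-decreasing) verdict $v$ universally monitors $\overbar{p}$. The key step is to show that $v$ dominates $\nu_p$, that is $v(s) \geq \nu_p(s)$ in $\D$ for every finite trace $s$. Fix $s$ and any continuation $g$: along the prefixes of $sg$ extending $s$, the sequence of $v$-values is $\D$-decreasing with limit $\overbar{p}(sg) = p(sg)$, so its first term satisfies $v(s) \geq p(sg)$; taking the supremum over all $g$ yields $v(s) \geq \nu_p(s)$. Now fix an infinite trace $f$. For every prefix $s \prec f$ we have the sandwich $v(s) \geq \nu_p(s) \geq p(f)$, the right inequality because $p(f)$ is one of the continuation values defining $\nu_p(s)$. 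All three sequences are $\D$-decreasing, so passing to limits gives $p(f) = \lim v(f) \geq \lim \nu_p(f) \geq p(f)$, forcing $\lim \nu_p(f) = p(f)$. As $f$ was arbitrary, $p$ is continuous.

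The main obstacle I anticipate is bookkeeping the two opposite orders consistently: the verdict lives on $\D_{\textit{inv}}$ where it is increasing, while the inequalities that drive the argument (monotonicity of $\nu_p$, the domination $v(s) \geq \nu_p(s)$, and the final sandwich) are most naturally stated in $\D$. Once the translation $\lim_{\D_{\textit{inv}}} = \inf_{\D}$ for monotone sequences is fixed, each individual step is routine; the genuinely load-bearing claim is the domination $v(s) \geq \nu_p(s)$, which is where the universal-monitoring hypothesis on $v$ is actually used.
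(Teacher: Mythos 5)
Your proposal is correct and follows essentially the same route as the paper's proof: the forward direction uses $\nu_p$ itself as the monotonically decreasing verdict (increasing over $\D_{\textit{inv}}$), and the converse hinges on the same domination lemma $v(s) \geq \nu_p(s)$ followed by the same sandwich argument $p(f) = \lim v(f) \geq \lim \nu_p(f) \geq p(f)$. If anything, your direct derivation of the domination step (for each continuation $g$, the decreasing sequence of verdict values along $sg$ gives $v(s) \geq \lim v(sg) = p(sg)$, then take the supremum over $g$) is cleaner than the paper's argument by contradiction, which splits cases on whether the supremum $\nu_p(s)$ is attained and is slightly delicate in a general partial order.
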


\begin{proof}
	Observe that $\overbar{p}$ is universally monitorable by a monotonically increasing verdict function iff $p$ is universally monitorable by a monotonically decreasing verdict function.
	For the \emph{only if} direction, suppose $p$ is continuous, i.e., $\lim \nu_p(f) = p(f)$ for every $f \in \Sigma^\omega$.
	Since $\nu_p$ is monotonically decreasing and it converges to the property value for every infinite trace, we can use it as the verdict function to universally monitor $p$.
	
	Now, let $v$ be a monotonically decreasing verdict function such that $\lim v(f) = p(f)$ for all $f \in \Sigma^\omega$.
	We claim that $v(s) \geq \nu_p(s)$ for all $s \in \Sigma^*$.
	Suppose towards contradiction that $v(s) < \nu_p(s) $ for some $s \in \Sigma^*$.
	Since we have either (i) $\nu_p(s) = p(sg)$ for some $g \in \Sigma^\omega$, or (ii) for every $g \in \Sigma^\omega$ there exists $h \in \Sigma^\omega$ such that $p(sg) < p(sh)$, we obtain $v(s) < p(sf)$ for some $f \in \Sigma^\omega$. 
	It contradicts the assumption that $v$ is a monotonically decreasing verdict which universally monitors $p$ from below, therefore our claim is correct.
	Now, observe that $v(s) \geq \nu_p(s)$ for all $s \in \Sigma^*$ implies $\lim v(f) \geq \lim \nu_p(f)$ for all $f \in \Sigma^\omega$.
	Since $v$ universally monitors $p$, we get $p(f) \geq \lim \nu_p(f)$ for all $f \in \Sigma^\omega$.
	By the definition of $\nu_p$, we also know that for every property $p$ and infinite trace $f \in \Sigma^\omega$, we have $\lim \nu_p(f) \geq p(f)$.
	Therefore, we conclude that $\lim \nu_p(f) = p(f)$ for all $f \in \Sigma^\omega$, i.e., $p$ is continuous.
\end{proof}

Combining Theorem~\ref{thm:continuous} and Definition~\ref{defn:cont}, we immediately get the following characterization for the co-continuous properties.

\begin{corollary} \label{cor:cocontinuous}
	A quantitative property $p$ is co-continuous iff
	$p$ is universally monitorable by a monotonically increasing verdict function.
\end{corollary}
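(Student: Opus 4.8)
The plan is to treat this as a direct consequence of Theorem~\ref{thm:continuous}, obtained by instantiating that theorem at the complement property $\overbar{p}$ and exploiting the fact that complementation is an involution. No new monitoring argument is needed; the work is purely in unwinding the definitions.

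First I would recall that, by Definition~\ref{defn:cont}, the assertion that $p$ is co-continuous is literally the assertion that $\overbar{p}$ is continuous. Next I would apply Theorem~\ref{thm:continuous} not to $p$ but to $\overbar{p}$: reading the theorem with $\overbar{p}$ in the role of $p$ yields that $\overbar{p}$ is continuous iff $\overbar{\overbar{p}}$ is universally monitorable by a monotonically increasing verdict function. It then remains only to observe that $\overbar{\overbar{p}} = p$. This holds because inverting a value domain reverses its order, so inverting twice restores the original order, i.e.\ $(\D_{\textit{inv}})_{\textit{inv}} = \D$; since $\overbar{p}$ keeps the same underlying values as $p$ and only swaps the order, the double complement has domain $\D$ and agrees with $p$ pointwise. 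Chaining these equivalences gives exactly the claim: $p$ is co-continuous iff $\overbar{p}$ is continuous iff $p$ is universally monitorable by a monotonically increasing verdict function.

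The only point that requires a moment of care is the bookkeeping of the monotonicity direction under inversion. A verdict function that is monotonically increasing with respect to the order of $\D$ becomes monotonically decreasing with respect to $\D_{\textit{inv}}$, and vice versa; this is precisely the opening observation in the proof of Theorem~\ref{thm:continuous}. Because I invoke that theorem verbatim on $\overbar{p}$, the direction of monotonicity is tracked automatically and no separate case analysis is needed. I would also note that the unqualified phrase ``universally monitors'' is meaningful here, since the value domain is assumed to be a complete lattice, hence has a least and a greatest element, so a monotonic verdict that monitors from below also monitors from above. This is the mildest of obstacles; the corollary is essentially a one-line rewriting once the involution $\overbar{\overbar{p}} = p$ is made explicit.
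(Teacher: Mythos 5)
Your proposal is correct and follows exactly the paper's route: the paper also obtains this corollary immediately by combining Theorem~\ref{thm:continuous} (instantiated at $\overbar{p}$) with Definition~\ref{defn:cont}, using that complementation is an involution. Your spelled-out bookkeeping of the order reversal and of $\overbar{\overbar{p}} = p$ simply makes explicit what the paper leaves as ``immediate.''
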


Let $\D$ be a numerical domain and recall the maximal response-time property from Example~\ref{ex:mrt}.
As we discussed previously, it is universally monitorable by a monotonically increasing verdict function,
and therefore co-continuous.
By the same token, one can define the \emph{minimal response-time} property, which is continuous.
However, average response time, which requires a non-monotonic verdict function although it is universally monitorable from above,
is neither continuous nor co-continuous.
We also remark that discounted safety and co-safety properties \cite{deAlfaro03} are continuous and co-continuous, respectively.
In Section~\ref{sect:bool}, we will discuss how these notions relate to safety and co-safety in the boolean setting.

\section{Monitoring Boolean Properties} \label{sect:bool}

\subsection{Boolean monitorability as quantitative monitorability}

Quantitative properties generalize boolean properties.
For every boolean property $P \subseteq \Sigma^\omega$, the characteristic function
$\tau_P : \Sigma^\omega \to \{\fa, \tr\}$ is a quantitative property,
where $\tau_P(f) = \tr$ if $f \in P$, and $\tau_P(f) = \fa$ if $f \notin P$.
Using this correspondence, we can embed the main boolean notions of monitorability
within our quantitative framework.
For this, we consider four different boolean value domains:
\begin{itemize}
	\item $\B = \{\fa, \tr\}$ such that $\fa$ and $\tr$ are incomparable.
	\item $\B_\bot = \B \cup \{\bot\}$ such that $\bot < \fa$ and $\bot < \tr$.
	\item $\B_t = \{\fa, \tr\}$ such that $\fa < \tr$.
	\item $\B_f = \{\fa, \tr\}$ such that $\tr < \fa$.
\end{itemize}

Most work in monitorability assumes irrevocable verdicts.
On the domains $\B$ and $\B_\bot$, where $\tr$ and $\fa$ are incomparable,
the irrevocability of verdicts corresponds to monotonically increasing verdict functions.
For these, positive verdicts in $\B_t$ and negative verdicts in $\B_f$ are also irrevocable.
The following observations about verdict functions on boolean domains are useful as well.

\begin{remark} \label{rem:allornothing}
	Let $v$ be a verdict function on $\B$ or $\B_\bot$.
	If $v$ is monotonic, it cannot switch between $\tr$ and $\fa$, as these values are incomparable.
	Therefore, $v$ can monitor only $\emptyset$ and $\Sigma^\omega$ in $\B$.
	If $v$ is unrestricted, it can switch between $\tr$ and $\fa$ only finitely often,
	because the $\limsup$ and $\liminf$ over every infinite trace must be defined.
\end{remark}

We begin with the classical definition of monitorability for boolean properties \cite{Pnueli06b,Bauer11}.
Let $P \subseteq \Sigma^\omega$ be a boolean property.
A finite trace $s \in \Sigma^*$ \emph{positively} (resp.\ \emph{negatively}) \emph{determines} $P$ iff
for every $f \in \Sigma^\omega$, we have $sf \in P$ (resp.\ $sf \notin P$).
The boolean property $P$ is \emph{classically monitorable} iff for every $s \in \Sigma^*$,
there exists $r \in \Sigma^*$ such that $sr$ positively or negatively determines $P$.
This definition
coincides with the \emph{persistently informative monitorability} of \cite{Aceto19a}.
It is also captured by our definition of existential monitorability by monotonic verdicts on $\B_\bot$.

\begin{proposition} \label{thm:strmon}
	A boolean property $P$ is classically monitorable iff $\tau_P$ is existentially monitorable
	from below by a monotonically increasing verdict function on $\B_\bot$.
\end{proposition}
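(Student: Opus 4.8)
The plan is to prove both directions by exploiting the rigid shape that Remark~\ref{rem:allornothing} forces on a monotonically increasing verdict over $\B_\bot$: along any increasing chain of prefixes such a verdict stays at $\bot$ for a while and may then commit \emph{once} to $\tr$ or to $\fa$, after which it can never change, since $\tr$ and $\fa$ are incomparable. Consequently, on an infinite trace $f$ the value $\limsup v(f)$ equals $\bot$ if $v$ never commits, and otherwise equals the single committed value. The crux of the whole argument is the translation between a commitment of $v$ and a determination of $P$, so I first want to record the \emph{soundness observation}: if $v$ approximates $\tau_P$ from below (condition~(i)), then $v(s) = \tr$ forces $s$ to positively determine $P$, and $v(s) = \fa$ forces $s$ to negatively determine $P$. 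Indeed, if $v(s) = \tr$ then monotonicity makes $\limsup v(g) = \tr$ for every $g$ extending $s$, so (i) gives $\tr \le \tau_P(g)$; as $\tr$ is maximal in $\B_\bot$, this means $g \in P$ for every such $g$, i.e.\ $s$ positively determines $P$. The case $v(s) = \fa$ is symmetric.

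For the \emph{if} direction I would assume an existential monitor $v$ and verify classical monitorability directly. Fix $s \in \Sigma^*$; condition~(ii) supplies $f$ with $\limsup v(sf) = \tau_P(sf)$. Since $\tau_P$ takes only the values $\tr$ and $\fa$, this common value is one of them, so $v$ must commit on some finite prefix $sr \preceq sf$ to exactly that value. Applying the soundness observation to $sr$ shows that $sr$ positively (resp.\ negatively) determines $P$, which is precisely what classical monitorability demands at $s$.

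For the \emph{only if} direction I would construct a verdict witnessing existential monitorability from classical monitorability. The natural choice is the ``earliest-determination'' verdict: set $v(s) = \tr$ if some prefix of $s$ positively determines $P$, set $v(s) = \fa$ if some prefix negatively determines $P$, and $v(s) = \bot$ otherwise. I first need that these two cases are mutually exclusive (a positively and a negatively determining prefix would be comparable and contradict each other on a common extension), so $v$ is well defined, and that positive and negative determination are upward closed along prefixes, which gives monotonicity. Condition~(i) then holds because whenever $v$ commits on a prefix of $f$ the commitment is to the correct value of $\tau_P(f)$, and otherwise $\limsup v(f) = \bot$ lies below everything. Condition~(ii) is exactly classical monitorability: given $s$, take $r$ with $sr$ positively or negatively determining $P$ and extend $sr$ to an infinite $sf$; then $v$ has committed by $sr$, and the committed value equals $\tau_P(sf)$, so $\limsup v(sf) = \tau_P(sf)$.

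The routine calculations---well-definedness, monotonicity, and the $\limsup$ of an eventually-constant increasing sequence---are straightforward; the one step that deserves care, and which I regard as the heart of the proposition, is the soundness observation, since it is what makes ``the verdict commits to $\tr$'' interchangeable with ``the prefix positively determines $P$'' and thereby lets the existential clause~(ii) coincide on the nose with the classical monitorability condition.
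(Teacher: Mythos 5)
Your proof is correct. The paper states Proposition~\ref{thm:strmon} without giving a proof (it is presented as a direct consequence of the definitions, alongside the connection to persistently informative monitorability of \cite{Aceto19a}), and your argument supplies exactly the natural missing justification: the soundness observation that, under condition~(i) and monotonicity on $\B_\bot$, a commitment to $\tr$ (resp.\ $\fa$) forces positive (resp.\ negative) determination of $P$ because $\tr$ and $\fa$ are maximal, together with the earliest-determination verdict for the converse. Both directions check out, including the minor point that a commitment occurring at a prefix of $s$ itself propagates to $s$ by monotonicity, so the witness $sr$ with $r=\varepsilon$ still satisfies the classical monitorability condition.
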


According to \cite{Aceto19a}, a boolean property $P$ is \emph{satisfaction} (resp.\ \emph{violation})
\emph{monitorable} iff there exists a monitor that reaches a positive (resp.\ negative) verdict for
every $f \in P$ (resp.\ $f \notin P$).
More generally, if monitorability is parameterized by a truth domain as in \cite{Falcone12},
then violation and satisfaction monitorability correspond to monitorability over $\{\bot,\fa\}$ and
$\{\bot,\tr\}$, and capture exactly the classes of safety and co-safety properties, respectively.
In our framework, violation (resp.\ satisfaction) monitorability is equivalent to
universal monitorability by monotonically increasing verdicts on $\B_f$ (resp.\ $\B_t$),
because they require reaching an irrevocable negative (resp.\ positive) verdict for traces
that violate (reps.\ satisfy) the property.

\begin{theorem} [\cite{Falcone12}] \label{thm:safe+cosafe} 
	A boolean property $P$ is safe (resp.\ co-safe) iff $\tau_P$ is universally monitorable by a monotonically increasing verdict function on $\B_f$ (resp.\ $\B_t$).
\end{theorem}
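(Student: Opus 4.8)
The plan is to prove the safety case directly via the classical bad-prefix characterization, and then obtain the co-safety case by the symmetric argument. Note first that $\B_f$ has least element $\tr$ and greatest element $\fa$; since the candidate verdict is monotonic and the domain has a least and greatest element, universal monitorability from below coincides with universal monitorability, so throughout it suffices to establish the identity $\limsup v(f) = \tau_P(f)$ for every $f \in \Sigma^\omega$.

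For the \emph{forward} direction, assume $P$ is safe. I would define the verdict by $v(s) = \fa$ if $s$ negatively determines $P$ (that is, $sg \notin P$ for all $g \in \Sigma^\omega$), and $v(s) = \tr$ otherwise. This $v$ is monotonically increasing on $\B_f$: if $s \prec t$ and $v(s) = \fa$, then every extension of $t$ is an extension of $s$ and hence outside $P$, so $v(t) = \fa \geq v(s)$. To check universal monitoring, split on membership. If $f \in P$, no prefix of $f$ can negatively determine $P$, since such a prefix would force $f \notin P$; hence $v$ is constantly $\tr$ along $f$ and $\limsup v(f) = \tr = \tau_P(f)$. If $f \notin P$, safety supplies a finite prefix $s \prec f$ that negatively determines $P$, so $v(s) = \fa$, and by monotonicity $\limsup v(f) = \fa = \tau_P(f)$.

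For the \emph{converse}, suppose a monotonically increasing $v$ on $\B_f$ universally monitors $\tau_P$. Take any $f \notin P$, so $\limsup v(f) = \fa$. Because $v$ ranges over the two-element chain $\tr < \fa$ monotonically, by the reasoning of Remark~\ref{rem:allornothing} the sequence cannot oscillate, so $\limsup v(f) = \fa$ forces $v(s) = \fa$ for some prefix $s \prec f$. I claim this $s$ negatively determines $P$: otherwise there is $g$ with $sg \in P$, but monotonicity gives $v(sr) \geq v(s) = \fa$, hence $v(sr) = \fa$ for all $r$, so $\limsup v(sg) = \fa$, contradicting $\limsup v(sg) = \tau_P(sg) = \tr$. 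Thus every trace outside $P$ has a finite prefix negatively determining $P$, which is precisely safety. The co-safety statement over $\B_t$ (where $\fa < \tr$) is entirely symmetric, obtained by interchanging $\tr \leftrightarrow \fa$, ``negatively'' $\leftrightarrow$ ``positively'', and the orientation of the chain, with the verdict instead set to $\tr$ once a prefix positively determines $P$.

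The main subtlety to handle carefully is the $\limsup$ identity on a monotone sequence in the two-element domain: one must argue that reaching the top $\fa$ is both \emph{necessary} and \emph{sufficient} for $\limsup v(f) = \fa$, which is exactly where the chain structure and the non-oscillation of monotonic verdicts (Remark~\ref{rem:allornothing}) are used. Everything else reduces to the translation between ``irrevocable negative verdict'' and ``finite bad prefix,'' which is the standard content of violation monitorability equalling safety.
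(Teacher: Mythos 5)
Your proof is correct, but it takes a different route from the paper. The paper attributes this theorem to Falcone et al.\ and does not prove it at the point of statement; its own justification appears later, in the discussion preceding Theorem~\ref{thm:obl}: there the paper shows that for a safety property $P$ the function $\nu_{\tau_P}$ (the supremum over continuations) witnesses that $\tau_P$ is \emph{continuous} on $\B_t$, equivalently that $\overbar{\tau_P}$ is co-continuous on $\B_f$, and then invokes the general characterization of Theorem~\ref{thm:continuous} and Corollary~\ref{cor:cocontinuous} to conclude universal monitorability by a monotonic verdict. Your argument is instead a direct, self-contained one: you construct the bad-prefix verdict explicitly (which is in fact the same function as $\nu_{\tau_P}$, transported to $\B_f$), and for the converse you extract a bad prefix from any monotone verdict that reaches $\fa$, using irrevocability to derive a contradiction with $sg \in P$. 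What the paper's route buys is uniformity: it exhibits boolean safety/co-safety as the specialization of the quantitative continuity/co-continuity notions, which is the conceptual point of that section; what your route buys is elementarity, since it needs nothing beyond the definitions and works without the complete-lattice machinery behind Theorem~\ref{thm:continuous}. One small inaccuracy: your appeal to Remark~\ref{rem:allornothing} is misplaced, since that remark concerns $\B$ and $\B_\bot$, where $\tr$ and $\fa$ are incomparable; in $\B_f$ the non-oscillation you need is simply the fact that a monotonically increasing sequence in the two-element chain $\tr < \fa$ changes value at most once, which requires no citation. This does not affect correctness.
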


A boolean property $P$ is \emph{partially monitorable} according to \cite{Aceto19a} iff
it is satisfaction or violation monitorable.
This corresponds to parametric monitorability over the 3-valued domain $\{\bot,\tr,\fa\}$,
and is equivalent to the union of safety and co-safety \cite{Falcone12}. 
Due to the duality of $\B_f$ and $\B_t$, in our framework, partial monitorability corresponds to universal monitorability by monotonic verdict functions on either of these domains.

\begin{corollary} \label{thm:safeorcosafe}
	A boolean property $P$ is safe or co-safe iff $\tau_P$ is universally
	monitorable by a monotonic verdict function on $\B_t$ (equivalently, on $\B_f$).
\end{corollary}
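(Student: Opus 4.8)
The plan is to derive this corollary directly from Theorem~\ref{thm:safe+cosafe} together with the duality between the domains $\B_t$ and $\B_f$, rather than reproving anything from scratch. First I would unfold the definitions: a property $P$ is safe-or-co-safe exactly when $P$ is safe or $P$ is co-safe, so the statement is a disjunction on the left and a single universal-monitorability claim on the right. By Theorem~\ref{thm:safe+cosafe}, $P$ is safe iff $\tau_P$ is universally monitorable by a monotonically increasing verdict on $\B_f$, and $P$ is co-safe iff $\tau_P$ is universally monitorable by a monotonically increasing verdict on $\B_t$. The task is therefore to show that ``monitorable by a monotonically increasing verdict on $\B_f$ or on $\B_t$'' is the same as ``monitorable by a monotonic (i.e.\ unrestricted-direction but monotone) verdict on $\B_t$,'' and symmetrically on $\B_f$.

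The key observation I would use is that $\B_t$ and $\B_f$ are order-inverses of one another: both have carrier $\{\fa,\tr\}$, but $\B_t$ orders $\fa<\tr$ while $\B_f$ orders $\tr<\fa$. Hence a verdict function $v$ that is monotonically increasing on $\B_f$ is, viewed on the same carrier with the $\B_t$ ordering, monotonically decreasing, and vice versa. So ``monotonically increasing on $\B_f$'' and ``monotonically decreasing on $\B_t$'' describe literally the same class of verdict functions; likewise ``monotonically increasing on $\B_t$'' coincides with ``monotonically decreasing on $\B_f$.'' Since a verdict is \emph{monotonic} precisely when it is monotonically increasing or monotonically decreasing, the combined pool ``increasing on $\B_t$ together with increasing on $\B_f$'' is exactly the pool of monotonic verdict functions on $\B_t$ (and, by the same inversion, on $\B_f$).

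Putting the pieces together, I would argue: $P$ is safe or co-safe iff ($\tau_P$ is universally monitorable by a monotonically increasing verdict on $\B_f$) or ($\tau_P$ is universally monitorable by a monotonically increasing verdict on $\B_t$), by Theorem~\ref{thm:safe+cosafe} applied to each disjunct. Rewriting the first disjunct via the inversion above as ``monotonically decreasing on $\B_t$,'' the whole right-hand side becomes ``$\tau_P$ is universally monitorable on $\B_t$ by a verdict that is either monotonically increasing or monotonically decreasing,'' which is precisely universal monitorability by a monotonic verdict on $\B_t$. The equivalent formulation on $\B_f$ follows by the symmetric rewriting.

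One subtlety I expect to be the only real obstacle is reconciling ``universally monitorable \emph{from below}/\emph{above}'' with the plain ``universally monitorable'' used in the statement. When $\D$ has least and greatest elements, the excerpt already notes that a monotonic verdict monitoring from below also monitors from above, so on the two-element domains $\B_t$ and $\B_f$ the from-below and from-above qualifiers collapse, and I would invoke that remark to justify dropping the directional adjective. I should also confirm that the order-inversion genuinely preserves the \emph{universal} modality and the convergence of the verdict sequence: inverting the order swaps $\limsup$ with $\liminf$ and $\leq$ with $\geq$ throughout Definition~1, so an equality $\lim v(f)=\tau_P(f)$ on one domain transports verbatim to the inverted domain. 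With these bookkeeping points checked, the corollary is immediate from Theorem~\ref{thm:safe+cosafe} and the definition of \emph{monotonic}.
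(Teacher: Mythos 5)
Your proposal is correct and follows essentially the same route as the paper: the corollary is obtained from Theorem~\ref{thm:safe+cosafe} together with the order-duality of $\B_t$ and $\B_f$, under which ``monotonically increasing on $\B_f$'' and ``monotonically decreasing on $\B_t$'' are literally the same class of verdict functions, so the two disjuncts of Theorem~\ref{thm:safe+cosafe} pool into exactly the monotonic verdicts on either domain. Your bookkeeping points (collapse of the from-below/from-above qualifiers on these finite chains, and transport of the limit under order inversion since monotonic verdicts on a two-element chain are eventually constant) are precisely the ``duality of $\B_f$ and $\B_t$'' the paper invokes without elaboration.
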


Also defined in \cite{Aceto19a} is the notion of \emph{complete monitorability},
which requires both satisfaction and violation monitorability.
It is equivalent to our universal monitorability by monotonic verdict functions
on $\B_\bot$, meaning that for every trace $f \in P$ we reach a positive verdict,
and for every $f \notin P$, a negative verdict.

\begin{theorem} [\cite{Aceto19a}] \label{thm:safencosafe}
	A boolean property $P$ is both safe and co-safe iff $\tau_P$ is universally
	monitorable by a monotonically increasing verdict function on $\B_\bot$.
\end{theorem}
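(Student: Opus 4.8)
The plan is to prove both directions by reducing to Theorem~\ref{thm:safe+cosafe}, which already identifies safety (resp.\ co-safety) with universal monitorability by a monotonically increasing verdict on $\B_f$ (resp.\ $\B_t$). The only genuine work is to split a single $\B_\bot$-verdict into a pair of such verdicts, and conversely to merge a pair of them into one $\B_\bot$-verdict, while respecting the unusual ordering of $\B_\bot$ in which $\tr$ and $\fa$ are incomparable.

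For the \emph{if} direction, suppose $v$ is a monotonically increasing verdict on $\B_\bot$ that universally monitors $\tau_P$. I would project $v$ onto the two linear domains by setting $v_f(s) = \fa$ when $v(s) = \fa$ and $v_f(s) = \tr$ otherwise, and dually $v_t(s) = \tr$ when $v(s) = \tr$ and $v_t(s) = \fa$ otherwise. Since $v(s) \le v(t)$ in $\B_\bot$ whenever $s \prec t$, a case check shows $v_f$ is monotonically increasing on $\B_f$ (where $\tr < \fa$) and $v_t$ is monotonically increasing on $\B_t$ (where $\fa < \tr$); Remark~\ref{rem:allornothing} moreover guarantees that along any trace the sequence never switches between the two incomparable values. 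For the limits: on a violating trace $v$ converges to $\fa$, so both $v_f$ and $v_t$ converge to $\fa$, and the satisfying case is symmetric with limit $\tr$. Hence $\lim v_f(f) = \lim v_t(f) = \tau_P(f)$ for every $f$, and Theorem~\ref{thm:safe+cosafe} yields that $P$ is both safe and co-safe.

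For the \emph{only if} direction, assume $P$ is safe and co-safe, and let $v_f$ on $\B_f$ and $v_t$ on $\B_t$ be the monotonically increasing verdicts supplied by Theorem~\ref{thm:safe+cosafe}. I would define a single $\B_\bot$-verdict by
\[
v(s) = \begin{cases}
\tr & \text{if } v_t(s) = \tr,\\
\fa & \text{if } v_f(s) = \fa,\\
\bot & \text{otherwise.}
\end{cases}
\]
The first step is well-definedness: if $v_t(s) = \tr$ then monotonicity on $\B_t$ together with universal monitoring forces every extension $sf$ into $P$, so $s$ positively determines $P$; symmetrically $v_f(s) = \fa$ means $s$ negatively determines $P$; and no single prefix can do both since $\Sigma^\omega \neq \emptyset$. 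Monotonicity of $v$ on $\B_\bot$ then follows because the positively- and negatively-determining prefixes are each upward closed under extension, so once $v$ leaves $\bot$ it stays at that value and never moves between $\tr$ and $\fa$. Universal monitoring holds because co-safety gives every $f \in P$ a positively-determining prefix (pushing $v$ to $\tr$) and safety gives every $f \notin P$ a negatively-determining prefix (pushing $v$ to $\fa$), so $\lim v(f) = \tau_P(f)$ in all cases.

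The main obstacle I anticipate is the monotonicity bookkeeping in $\B_\bot$ rather than any deep argument: because $\tr$ and $\fa$ are incomparable, the merged verdict is legal only if no trace ever exhibits both a positively- and a negatively-determining prefix, which is exactly the point where both safety and co-safety (equivalently, the clopenness of $P$) must be invoked together. Verifying that the merged verdict never needs to move between the two incomparable top values, and that the two projections in the converse are genuinely monotone on their respective linear domains, is the delicate though routine heart of the proof.
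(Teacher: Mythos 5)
Your proof is correct: the splitting of a monotonically increasing $\B_\bot$-verdict into its $\B_f$- and $\B_t$-projections, the merge in the converse direction, and the well-definedness, monotonicity, and limit bookkeeping all go through (the one slightly loose phrase---that a positively determining prefix ``pushes $v$ to $\tr$''---is better justified by noting that $\lim v_t(f) = \tr$ forces $v_t(s) = \tr$ on some prefix, but this is cosmetic). The paper itself gives no proof, importing the theorem from \cite{Aceto19a}; your reduction to Theorem~\ref{thm:safe+cosafe} is precisely the correspondence the surrounding text sketches---a $\B_\bot$-monitor is the combination of a satisfaction monitor on $\B_t$ and a violation monitor on $\B_f$---so this is essentially the intended argument with the details filled in.
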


Based on the idea of revocable verdicts, a 4-valued domain $\{\tr,\fa,\tr_c,\fa_c\}$ is also considered in \cite{Falcone12}, where $\tr$ and $\fa$ are still irrevocable
but the inconclusive verdict $\bot$ is split into two verdicts $\tr_c$ (``currently true'')
and $\fa_c$ (``currently false'') for more nuanced reasoning on finite traces.
In the universe of $\omega$-regular properties, their monitorability over this domain corresponds to the class of reactivity properties of the safety-progress hierarchy \cite{Chang93}.
In our framework, unrestricted verdict functions provide a similar effect as revocable verdicts.
We will show in Theorem~\ref{thm:react} and Example~\ref{ex:react} that unrestricted verdict functions on $\B_\bot$ can existentially monitor reactivity properties and more.

Finally, two weak forms of boolean monitorability defined in \cite{Aceto19a} are
\emph{sound monitorability} and \emph{informative monitorability}.
While sound monitorability corresponds to approximate monitorability in $\B_\bot$,
informative monitorability corresponds to approximate monitorability in $\B_\bot$ by monotonic verdicts
but excluding the constant verdict function $\bot$.

\subsection{Monitoring the safety-progress hierarchy}

We first show that some of the modalities of quantitative monitoring are equivalent over boolean domains.
Proposition~\ref{pro:equiv} also indicates some limitations of flat value domains.

\begin{proposition} \label{pro:equiv}
	Let $P$ be a boolean property and $\tau_P$ be the corresponding quantitative property.
	The following statements are equivalent.
	\begin{enumerate}[(1)]
		\item $\tau_P$ is existentially monitorable from below by an unrestricted verdict function on $\B$.
		\item $\tau_P$ is universally monitorable from below by an unrestricted verdict function on $\B$.
		\item $\tau_P$ is universally monitorable from below by an unrestricted verdict function on $\B_\bot$.
	\end{enumerate}
\end{proposition}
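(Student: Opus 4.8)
The plan is to establish the two equivalences $(1)\Leftrightarrow(2)$ and $(2)\Leftrightarrow(3)$ separately; chained together they give the stated result. The guiding observation is that on the flat domain $\B$ the order $\leq$ is the identity relation, since $\tr$ and $\fa$ are incomparable. Consequently, for any verdict $v$ on $\B$ and any $f\in\Sigma^\omega$, the inequality $\limsup v(f)\leq\tau_P(f)$ holds if and only if $\limsup v(f)=\tau_P(f)$; that is, on $\B$ \emph{approximating} $\tau_P$ from below already coincides with \emph{monitoring} it from below. This collapse is what makes the existential and universal modalities agree on $\B$.

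First I would prove $(1)\Leftrightarrow(2)$. The direction $(2)\Rightarrow(1)$ is the generic fact that universal monitorability implies existential monitorability: a verdict that monitors $\tau_P$ on every $f$ a fortiori approximates it everywhere and monitors it on some extension of each finite prefix. For $(1)\Rightarrow(2)$, let $v$ witness existential monitorability on $\B$. By clause (i) of existential monitorability we have $\limsup v(f)\leq\tau_P(f)$ for every $f$, and by the discreteness of $\B$ this inequality is in fact the equality $\limsup v(f)=\tau_P(f)$; hence the same $v$ already universally monitors $\tau_P$ from below on $\B$.

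Next I would prove $(2)\Leftrightarrow(3)$. The direction $(2)\Rightarrow(3)$ is a simple embedding: a verdict $v$ on $\B$ that universally monitors $\tau_P$ takes values in $\{\tr,\fa\}\subseteq\B_\bot$, and since Remark~\ref{rem:allornothing} forces $v$ to switch only finitely often, its verdict sequence is eventually constant and has the same $\limsup$ whether computed in $\B$ or in $\B_\bot$; thus the same $v$ works on $\B_\bot$. The reverse direction $(3)\Rightarrow(2)$ is the crux. Given $v$ on $\B_\bot$ with $\limsup v(f)=\tau_P(f)\in\{\tr,\fa\}$ for every $f$, I must eliminate the $\bot$ verdicts, which are unavailable in $\B$. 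The construction I propose is to let $v'(s)$ be the last non-$\bot$ value that $v$ outputs along the prefixes of $s$ (using $\fa$ as a default before any such value occurs), so that $v'\colon\Sigma^*\to\B$.

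The main obstacle is verifying that this fill-in construction preserves the limit. Here I would use that $\limsup v(f)=\tau_P(f)$ is never $\bot$: for each $f$ the sequence $v(f)$ is eventually confined to $\{\bot,c\}$ with $c=\tau_P(f)$ occurring infinitely often (again by the finite-switching consequence of Remark~\ref{rem:allornothing}). From the first $c$ that follows the last occurrence of the opposite Boolean value, the running last-non-$\bot$ value is pinned to $c$, so $v'(f)$ is eventually the constant $c$, giving $\limsup v'(f)=c=\tau_P(f)$. Thus $v'$ universally monitors $\tau_P$ from below on $\B$, and it is unrestricted. A final routine check confirms that $v'$ is a legitimate verdict function, i.e., that its verdict sequences admit the required suprema and infima.
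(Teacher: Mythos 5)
Your proposal is correct and follows essentially the same route as the paper's proof: the same split into $(1)\Leftrightarrow(2)$ and $(2)\Leftrightarrow(3)$, the same use of the flatness of $\B$ (via Remark~\ref{rem:allornothing}) to collapse approximation into monitoring, and for $(3)\Rightarrow(2)$ the same ``carry forward the last non-$\bot$ value'' construction, differing only in the default value and in how the finitely-many-switches case analysis is phrased. No gaps; nothing further is needed.
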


\begin{proof}
	The key observation for the proofs is that $\tr$ and $\fa$ are incomparable in $\B$ and $\B_\bot$, as pointed out in Remark~\ref{rem:allornothing}.
	
	$(1) \iff (2)$:
	Let $\tau_P$ be existentially monitorable from below by a verdict function $v$ on $\B$.
	Since $\tr$ and $\fa$ are incomparable, for every $f \in \Sigma^\omega$, if $\limsup v(f) \leq \tau_P(f)$ then $\limsup v(f) = \tau_P(f)$ in domain $\B$.
	Therefore, $v$ also universally monitors $p$ from below in $\B$.
	The other direction follows from Definitions~\ref{defn:univmon}~and~\ref{defn:exismon}.
	
	$(2) \iff (3)$:	
	The \emph{only if} direction follows from the fact that $\B_\bot$ is an extension of $\B$ with a least element.
	For the \emph{if} direction, suppose $v$ is a verdict function that universally monitors $\tau_P$ from below in $\B_\bot$.
	We construct a verdict function $u$ that imitates $v$ in $\B$ as follows: let $u(s) = v(s)$ if $v(s) \neq \bot$, and $u(s) = v(r)$ otherwise, where $r$ is the longest prefix of $s$ such that $v(r) \neq \bot$ (if there is no such prefix, assume w.l.o.g. that $u(s) = \tr$).
	Now, let $f \in \Sigma^\omega$ be an infinite trace, and observe that whenever $v(f)$ converges, so does $u(f)$.
	If $v(f)$ does not converge, then the subsequential limits must be either (i) $\bot$ and $\tr$, or (ii) $\bot$ and $\fa$.
	Suppose (i) is true.
	Then, there exists a prefix $s \prec f$ such that for all $r \in \Sigma^*$ satisfying $sr \prec f$ we have $v(sr) = \bot$ or $v(sr) = \tr$.
	If $v(s) = \tr$, then, by construction, $u$ always outputs $\tr$ starting from $s$.
	Otherwise, $u$ outputs $\fa$ until $v$ outputs $\tr$ (which is bound to happen by supposition), and converges to $\tr$ afterwards.
	The case for (ii) is dual.
	Therefore, for every $f \in \Sigma^\omega$, we have $\limsup u(f) = \limsup v(f)$, which means that $u$ universally monitors $p$ from below.
\end{proof}

\begin{proposition} \label{pro:exmonbtbf}
	For every boolean property $P$, we have that $\tau_P$ is existentially monitorable from below by a
	monotonically increasing verdict function on $\B_t$ iff $\tau_P$ is existentially monitorable from
	below by a monotonically increasing verdict function on $\B_f$.
\end{proposition}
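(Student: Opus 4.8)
The plan is to prove a slightly stronger statement that makes the biconditional immediate: \emph{both} of its sides hold for \emph{every} boolean property $P$. Concretely, I would exhibit, for an arbitrary $P$, a monotonically increasing verdict function on $\B_t$ that existentially monitors $\tau_P$ from below, and dually one on $\B_f$. Since neither construction uses any hypothesis on $P$, each side of the stated equivalence is unconditionally true, and so in particular each implies the other. The key objects are the sets of finite traces that \emph{positively} and \emph{negatively determine} $P$, as defined just before Proposition~\ref{thm:strmon}.

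For the $\B_t$ side I would define $v(s) = \tr$ if $s$ positively determines $P$, and $v(s) = \fa$ otherwise. First I would check monotonicity: if $s \prec t$ and $s$ positively determines $P$ then so does $t$ (every continuation of $t$ is a continuation of $s$), so $v(s)=\tr$ forces $v(t)=\tr$; and $v(s)=\fa$ is the bottom of $\B_t$, so $v(s)\le v(t)$ always. Then I would verify the two clauses of existential monitorability from below. For approximation on every $f$ (clause (i)): if $f\notin P$ then no prefix of $f$ can positively determine $P$ (otherwise $f$ itself would lie in $P$), so $v$ is constantly $\fa$ along $f$ and $\limsup v(f)=\fa=\tau_P(f)$; if $f\in P$ then $\tau_P(f)=\tr$ is the top of $\B_t$ and approximation from below is automatic. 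For clause (ii), fixing $s$: if $s$ positively determines $P$ then $v(s)=\tr$ and every $sf$ lies in $P$, so $\limsup v(sf)=\tr=\tau_P(sf)$; otherwise $s$ has a continuation $sf\notin P$, along which (again, no prefix positively determines $P$) the verdict stays $\fa$, giving $\limsup v(sf)=\fa=\tau_P(sf)$.

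The $\B_f$ side is exactly dual: set $v'(s)=\fa$ if $s$ negatively determines $P$, and $v'(s)=\tr$ otherwise. Since $\tr<\fa$ in $\B_f$, this is again monotonically increasing, and the same case analysis with the roles of $P$ and its complement, and of $\tr$ and $\fa$, interchanged shows that $v'$ existentially monitors $\tau_P$ from below on $\B_f$. Concluding the biconditional is then trivial, as both constructed verdicts exist for every $P$.

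The part I would stress, and really the only content, is \emph{why} existential monitorability is unconditionally available here, in sharp contrast to the $\B_\bot$ case of Proposition~\ref{thm:strmon}, where it pins down exactly the classically monitorable properties. On a totally ordered two-element domain the conservative verdict ($\fa$ on $\B_t$, $\tr$ on $\B_f$) is always sound for approximation from below, so clause (i) costs nothing, while clause (ii) only demands one confirming continuation per finite prefix, which is obtained by choosing a continuation that realizes the value the conservative verdict already predicts. The one step that needs genuine care is the soundness check inside clause (i): I must confirm that on a trace $f\notin P$ the $\B_t$-verdict never jumps to $\tr$, which is precisely the observation that a positively determining prefix forces membership in $P$. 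I expect this interplay between monotonicity and soundness to be the only place the argument could break if the definitions were applied carelessly.
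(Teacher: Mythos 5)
Your proof is correct, but it takes a different logical route from the paper's. The paper proves the proposition as a genuine transfer argument: starting from an arbitrary monotonically increasing verdict $v$ that existentially monitors $\tau_P$ from below on $\B_t$, it defines $u(s) = \fa$ iff $v(s) = \fa$ and $s$ negatively determines $P$ (and $\tr$ otherwise), then uses the hypothesis on $v$ (soundness plus monotonicity force $v(s)=\fa$ whenever $s$ negatively determines $P$) to simplify $u$ to ``$u(s)=\fa$ iff $s$ negatively determines $P$,'' verifies that this $u$ works on $\B_f$, and appeals to symmetry for the converse. You instead prove the stronger statement that \emph{both} sides of the biconditional hold unconditionally for every $P$, which renders the equivalence immediate. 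Your two constructions are exactly the ones the paper uses elsewhere: your $\B_t$ verdict (``$\tr$ iff $s$ positively determines $P$'') is verbatim the paper's proof of Theorem~\ref{thm:monoexbt}, and your $\B_f$ verdict coincides with the paper's simplified $u$. In effect you have merged Proposition~\ref{pro:exmonbtbf}, Theorem~\ref{thm:monoexbt}, and Corollary~\ref{cor:monoexbf} into a single argument, and your route makes explicit something the paper's organization only reveals later, namely that the proposition is an equivalence between two statements that are each always true. What the paper's version buys is the presentation of the proposition as a domain-transfer result whose proof does not presuppose the later universal theorem (it is stated and proved before Theorem~\ref{thm:monoexbt}); what yours buys is brevity and transparency, at the cost of making the biconditional itself vacuous as an equivalence. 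Your verification steps are sound, including the two points that genuinely need care: persistence of positive/negative determination under extension (monotonicity), and the observation that no prefix of a trace outside $P$ (respectively inside $P$) can positively (respectively negatively) determine $P$, which gives clause (i).
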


\begin{proof}
	Suppose $\tau_P$ is existentially monitorable from below by a monotonically increasing verdict function $v$ on $\B_t$.
	Consider the following verdict function: $u(s) = \fa$ if $v(s) = \fa$ and $\tau_P(sf) = \fa$ for all $f \in \Sigma^\omega$; and $u(s) = \tr$ if $v(s) = \tr$ or $\tau_P(sf) = \tr$ for some $f \in \Sigma^\omega$.
	Notice that for every $s \in \Sigma^*$, if $v(s) = \tr$ then $\tau_P(sf) = \tr$ for all $f \in \Sigma^\omega$, and if $v(s) = \fa$ then $\tau_P(sf) = \fa$ for some $f \in \Sigma^\omega$, or $v(sr) = \tr$ for some $r \in \Sigma^*$.
	Therefore, we can equivalently formulate $u$ as follows: $u(s) = \fa$ if $\tau_P(sf) = \fa$ for all $f \in \Sigma^\omega$; and $u(s) = \tr$ if $\tau_P(sf) = \tr$ for some $f \in \Sigma^\omega$.
	The function $u$ is indeed monotonically increasing.
	Further, $\limsup u(f) = \fa$ implies that there is $s \prec f$ such that $\tau_P(sg) = \fa$ for every $g \in \Sigma^\omega$, which means that for every $f \in \Sigma^\omega$ we have $\limsup u(f) \leq \tau_P(f)$.
	
	Next, we show that for every $s \in \Sigma^*$ there exists $f \in \Sigma^\omega$ such that $\limsup u(sf) = \tau_P(sf)$.
	Suppose towards contradiction that for some $s \in \Sigma^*$ every $f \in \Sigma^\omega$ gives us $\limsup u(sf) < \tau_P(sf)$, i.e., $\limsup u(sf) = \tr$ and $\tau_P(sf) = \fa$.
	Since $\limsup u(sf) = \tr$ and $u$ is monotonically increasing, we get $u(r) = \tr$ for every $r \prec sf$.
	It means that, by construction, for every $r \prec sf$ there exists $g \in \Sigma^\omega$ such that $\tau_P(rg) = \tr$.
	However, we get a contradiction since $s \prec sf$ and $\tau_P(sf) = \fa$ for every $f \in \Sigma^\omega$ by supposition.
	Therefore, we conclude that $u$ existentially monitors $\tau_P$ from below in $\B_f$.
	The \emph{if} direction can be proved symmetrically.
\end{proof}

Before we relate various modalities of monitoring and boolean value domains to the rest of the
safety-progress classification of boolean properties \cite{Chang93},
we discuss how boolean safety and co-safety are special cases of continuous and co-continuous properties from Section~\ref{sect:quant}.
Consider the value domain $\B_t$,
let $P$ be a safety property and $\tau_P$ be the corresponding quantitative property.
Observe that for every $s \in \Sigma^*$, we have $\nu_{\tau_P}(s) = \fa$ if $s$ negatively determines $P$,
and $\nu_{\tau_P}(s) = \tr$ otherwise.
Since $P$ is safe, we also have $\tau_P(f) = \lim \nu_{\tau_P}(f)$ for every $f \in \Sigma^\omega$,
which means that $\tau_P$ is continuous.
Moreover, the inverse $\overbar{\tau_P}$ is a co-continuous property on $\B_f$,
and it still corresponds to the same boolean safety property.
Therefore, by Theorem~\ref{thm:continuous}, we get that property $P$ is safe iff
$\overbar{\tau_P}$ is universally monitorable by a monotonically increasing verdict function on $\B_f$.
Similarly, co-safety properties correspond to the co-continuous properties on $\B_t$;
and thus, they are exactly the properties that are universally monitorable by monotonically increasing verdict functions
on $\B_t$.

Positive, finite boolean combinations of safety and co-safety properties are called \emph{obligation} properties \cite{Chang93}.
Every obligation property $P$ can be expressed in a canonical conjunctive normal form $\bigcap_{i=1}^n (S_i \cup C_i)$ for some positive integer $n$, where $S_i$ is safe and $C_i$ is co-safe for all $1 \leq i \leq n$.
Moreover, an obligation property in conjunctive normal form with $n = k$ is a \emph{$k$-obligation property}.

We prove that obligation properties are universally monitorable in $\B$, which naturally requires finitely many switches
between verdicts $\tr$ and $\fa$.
Moreover, we establish an equivalence between the infinite hierarchy of obligation properties and a hierarchy of verdict
functions on $\B$.

\begin{theorem} \label{thm:obl}
	A boolean property $P$ is a $k$-obligation property iff $\tau_P$ is universally monitorable by a verdict function on $\B$ that changes its value at most $2k$ times.
\end{theorem}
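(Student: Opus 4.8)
The plan is to prove both directions constructively, using the fact that on $\B$ a valid verdict is eventually constant along every trace (Remark~\ref{rem:allornothing}), so that its eventual value equals $\tau_P(f)$ and the quantity to control is the number of alternations between $\tr$ and $\fa$ along each $f$.

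For the \emph{only if} direction, I would start from a canonical CNF $P = \bigcap_{i=1}^k (S_i \cup C_i)$ and monitor each clause separately. For clause $i$, I set $v_i(s) = \fa$ exactly when some prefix of $s$ negatively determines $S_i$ (so $S_i$ is already lost) and no prefix of $s$ positively determines $C_i$ (so $C_i$ is not yet secured), and $v_i(s) = \tr$ otherwise; this is the natural combination of the safety monitor on $\B_f$ and the co-safety monitor on $\B_t$ from Theorem~\ref{thm:safe+cosafe}. The key structural observation is that, along any fixed $f$, the set of prefixes on which $v_i$ outputs $\fa$ is a single contiguous block: ``$S_i$ negatively determined'' is closed under extension and so is ``$C_i$ positively determined'', so $v_i$ reads $\tr$, then possibly $\fa$, then possibly $\tr$ again, stabilizing to $\tau_{S_i \cup C_i}(f)$. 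Taking $v$ to be the conjunction (i.e.\ $v(s) = \fa$ iff $v_i(s) = \fa$ for some $i$) yields a verdict whose $\fa$-region along $f$ is the union of $k$ such blocks. A union of $k$ intervals has at most $k$ maximal runs, so $v$ alternates at most $2k$ times; and since the finite conjunction is continuous in the eventual-value sense, $v$ stabilizes to $\bigwedge_i \tau_{S_i \cup C_i}(f) = \tau_P(f)$, so $v$ universally monitors $\tau_P$ on $\B$.

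For the \emph{if} direction, I would reconstruct the clause structure from the alternations of a given $v$ with at most $2k$ changes. For $j \geq 1$, let $E_j$ be the set of traces along which $v$ changes value at least $j$ times. Each $E_j$ is co-safe: membership is witnessed by a finite prefix and preserved under extension, hence $E_j$ is open; moreover $\Sigma^\omega = E_0 \supseteq E_1 \supseteq \dots \supseteq E_{2k}$ and $E_{2k+1} = \emptyset$. Writing $N(f)$ for the number of changes along $f$, the eventual value of $v$ --- hence $\tau_P(f)$ --- is determined by $v(\varepsilon)$ and the parity of $N(f)$, and $N(f) = \max\{j : f \in E_j\} \le 2k$. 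In the principal case $v(\varepsilon) = \tr$ we have $f \in P$ iff $N(f)$ is even, and since the odd values in $\{0,\dots,2k\}$ are exactly $2i-1$ for $1 \le i \le k$, one obtains $P = \bigcap_{i=1}^k (\overbar{E_{2i-1}} \cup E_{2i})$, where each $\overbar{E_{2i-1}}$ is safe and each $E_{2i}$ is co-safe; this is a $k$-obligation in CNF.

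The step I expect to be the main obstacle is precisely this last bookkeeping: matching the alternation count to the clause count so that at most $2k$ changes yield at most $k$ clauses rather than $k+1$. The two possible initial verdicts $v(\varepsilon) = \tr$ and $v(\varepsilon) = \fa$ behave asymmetrically --- the first pairs each $\fa$-window $\tr \to \fa \to \tr$ with one clause, whereas the second opens with an $\fa$-block that must be absorbed without spending an extra clause --- so the argument needs either a normalization of the monitor's initial value or an explicit pairing of the changes (the $(2i{-}1)$-th with the $2i$-th) to keep the count tight. Getting this pairing right, and confirming that the resulting CNF is genuinely of rank $k$, is the delicate part; the two halves of the theorem are otherwise symmetric reflections of one another, mediated by the co-safe chain $(E_j)_j$ on one side and the union-of-windows bound on the other.
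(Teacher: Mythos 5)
Your \emph{only if} direction is essentially the paper's own construction: your per-clause verdicts $v_i$, combined by conjunction, coincide with the verdict the paper defines (since negative/positive determination is preserved under extension, ``some prefix of $s$ determines'' equals ``$s$ determines''), and your union-of-$k$-intervals count is actually cleaner and more rigorous than the paper's informal worst-case argument for the $2k$ bound. Your \emph{if} direction, however, takes a genuinely different route. The paper argues by contradiction: it supposes $P$ is an $m$-obligation for some $m>k$ that cannot be written as a $k$-obligation, extracts a chain $s_1 \prec r_1 \prec \cdots \prec s_m \prec r_m$ of determining traces, and shows these force more than $2k$ switches. You instead read the clause structure directly off the monitor via the co-safe sets $E_j$ (``at least $j$ changes''), which is constructive and, unlike the paper's argument, also covers the possibility that $P$ is not an obligation property of any finite rank.

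The obstacle you flag in the case $v(\varepsilon)=\fa$ is real, and you should know that it cannot be closed by better bookkeeping: in that case your construction yields $P=\bigcup_{i=1}^{k}\bigl(E_{2i-1}\cap\overbar{E_{2i}}\bigr)$, a $k$-term \emph{disjunctive} combination of co-safety and safety, and this class provably does not embed into $k$-clause CNF. Concretely, over $\Sigma=\{a,b,c\}$ let $P$ be ``no $b$ occurs and some $a$ occurs.'' The verdict $v(s)=\tr$ iff $s$ contains an $a$ and no $b$ starts at $\fa$, changes value at most twice on every trace, and universally monitors $\tau_P$ on $\B$; yet $P$ is not a $1$-obligation: if $P=S\cup C$ with $S$ safe and $C$ co-safe, then from $c^\omega\notin P$ some prefix $c^n$ negatively determines $S$; since $c^n a^\omega\in P\setminus S$, it lies in $C$, so some prefix $c^n a^m$ positively determines $C$; but then $c^n a^m b^\omega\in C\subseteq P$, a contradiction. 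By your own Case 1, this also shows no $\tr$-starting verdict with at most $2$ changes can exist for this $P$, so ``normalizing the initial value'' is impossible as well. In other words, the delicate step is a genuine counterexample to the theorem as literally stated with the paper's CNF-based definition of $k$-obligation; the paper's proof hides the same issue in the step ``assume w.l.o.g.\ that $v(\varepsilon)=\tr$,'' which is not without loss of generality. Your argument is exactly as strong as the statement allows: Case 1 alone proves the theorem for $\tr$-starting verdicts, and Cases 1 and 2 together prove the correct symmetric statement, namely that $\tau_P$ is universally monitorable on $\B$ with at most $2k$ changes iff $P$ is expressible either as $\bigcap_{i=1}^{k}(S_i\cup C_i)$ or as $\bigcup_{i=1}^{k}(S_i\cap C_i)$.
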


\begin{proof}
	Suppose $P$ is a $k$-obligation property, in other words, $P = \bigcap_{i=1}^k (S_i \cup C_i)$ for some integer $k \geq 1$ where $S_i$ is safe and $C_i$ is co-safe for each $1 \leq i \leq k$.
	Consider the following verdict function: $v(s) = \tr$ if for every $1 \leq i \leq k$ we have $s$ does not negatively determine $S_i$ or $s$ positively determines $C_i$; and $v(s) = \fa$ if there exists $1 \leq i \leq k$ such that $s$ negatively determines $S_i$ and $s$ does not positively determine $C_i$.
	Note that if a finite trace $s$ positively or negatively determines a boolean property, then so does $sr$ for every finite continuation $r$.
	If $P$ cannot be expressed as a $(k-1)$-obligation property, then
	there exists a sequence of finite traces $s_1 \prec r_1 \prec \ldots \prec s_k \prec r_k$, w.l.o.g., such that for every $1 \leq i \leq k$ trace $s_i$ negatively determines $S_i$, does not negatively determine any $S_j$ for $j > i$, and does not positively determine any $C_j$ for $j \geq i$; and trace $r_i$ positively determines $C_i$, does not positively determine any $C_j$ for $j > i$, and does not negatively determine any $S_j$ for $j > i$.
	This is because otherwise some safety or co-safety properties either cannot be determined, which contradicts the fact that $P$ is an obligation property, or they are determined by the same finite traces, which contradicts the fact that $P$ is not a $(k-1)$-obligation property.
	Then, the worst case for $v$ is when $P$ is not $(k-1)$-obligation and it reads $r_k$ above, which forces $2k$ switches.
	One can verify that $v$ always converges to the correct property value, i.e., $\lim v(f) = \tau_P(f)$ for all $f \in \Sigma^\omega$.
	Therefore, verdict $v$ universally monitors $\tau_P$ in $\B$.
	
	For the other direction, suppose $\tau_P$ is universally monitorable by a verdict function on $\B$ that changes its value at most $2k$ times, and assume towards contradiction that $P$ is not a $k$-obligation property.
	In particular, suppose $P$ is an $m$-obligation property for some $m > k$, which cannot be expressed as a $k$-obligation, and let $P = \bigcap_{i=1}^{m} (S_i \cup C_i)$ where $S_i$ is safe and $C_i$ is co-safe for each $1 \leq i \leq m$.
	By the same argument used above, there exist finite traces $s_1 \prec r_1 \prec \ldots \prec s_{m} \prec r_{m}$, w.l.o.g., such that for every $1 \leq i \leq m$ trace $s_i$ negatively determines $S_i$, does not negatively determine any $S_j$ for $j > i$, and does not positively determine any $C_j$ for $j \geq i$; and trace $r_i$ positively determines $C_i$, does not positively determine any $C_j$ for $j > i$, and does not negatively determine any $S_j$ for $j > i$.
	Assume w.l.o.g. that $v(\varepsilon) = \tr$.
	After reading each finite trace described above, $v$ has to switch its output because otherwise we can construct a trace $f$ such that $\lim v(f) \neq \tau_P(f)$.
	But since $v$ can only change its value $2k$ times, it immediately yields that $v$ cannot universally monitor $\tau_P$ where $P$ is an $m$-obligation property for $m > k$.
	Therefore, $P$ must be a $k$-obligation property.
\end{proof}

The countable intersection of co-safety properties and the countable union of safety properties, i.e.,
so-called \emph{response} and \emph{persistence} properties \cite{Chang93}, respectively, are also universally monitorable.

\begin{theorem} \label{thm:resp}
	A boolean property $P$ is a response property iff $\tau_P$ is universally monitorable from below
	by an unrestricted verdict function on $\B_t$.
\end{theorem}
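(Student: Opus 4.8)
The plan is to prove both directions by connecting the definition of a response property (a countable intersection of co-safety properties) with the structural behavior of an unrestricted, below-monitoring verdict function on $\B_t$. The key technical fact I would rely on throughout is the characterization from Theorem~\ref{thm:safe+cosafe} and the surrounding discussion: a co-safety property $C$ is exactly one for which a finite trace can \emph{positively determine} membership, and on $\B_t$ a monotonically increasing verdict reaching $\tr$ is irrevocable. The subtlety is that ``from below'' on $\B_t$ (where $\fa < \tr$) means $\limsup v(f) \leq \tau_P(f)$, so outputting $\tr$ is a genuine commitment (it forces $\tau_P(f) = \tr$), whereas outputting $\fa$ is always safe and revocable. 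This asymmetry is what lets an unrestricted verdict ``retract'' optimism and is the engine of the whole proof.

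\textbf{Only if direction (response $\Rightarrow$ monitorable).} Suppose $P = \bigcap_{i \in \N} C_i$ with each $C_i$ co-safe. First I would define, for each $i$, the natural monotonically increasing co-safety verdict $v_i$ on $\B_t$ that outputs $\tr$ exactly once a prefix positively determines $C_i$ (such a $v_i$ universally monitors $\tau_{C_i}$ by Theorem~\ref{thm:safe+cosafe}, read through the co-safety case). I would then build a single verdict $v$ for $P$ that, on reading $s$, outputs $\tr$ iff all of $C_1, \ldots, C_{|s|}$ have been positively determined by $s$ (equivalently, the first $|s|$ of the $v_i$ already read $\tr$), and outputs $\fa$ otherwise. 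The crucial claim is $\limsup v(f) \leq \tau_P(f)$ for every $f$: if $\limsup v(f) = \tr$ then infinitely many prefixes output $\tr$, so for every index $i$ some prefix positively determines $C_i$, hence $f \in \bigcap_i C_i = P$ and $\tau_P(f) = \tr$. Conversely, if $f \in P$, then each $C_i$ is eventually positively determined, so for every $k$ there is a prefix on which the first $k$ conjuncts are all determined; but because the ``window'' $|s|$ grows, I must check that $v$ outputs $\tr$ cofinally often, giving $\limsup v(f) = \tr = \tau_P(f)$ and thus exact below-monitoring on $f \in P$ (and the approximation bound handles $f \notin P$). This establishes universal below-monitorability.

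\textbf{If direction (monitorable $\Rightarrow$ response).} Given an unrestricted $v$ on $\B_t$ that monitors $\tau_P$ from below, I would recover $P$ as a countable intersection of co-safety properties. The idea is to read off, from the positions where $v$ commits to $\tr$, a family of co-safety ``witnesses.'' Concretely, since $\limsup v(f) = \tr$ holds precisely when $f \in P$ (below-monitoring on $\B_t$ forces equality on $P$ because $\tr$ is the top and $\tr \le \tau_P(f)$ means $\tau_P(f) = \tr$), membership in $P$ is characterized by $v$ outputting $\tr$ infinitely often. I would capture this with co-safety sets $C_k = \{f : v \text{ outputs } \tr \text{ on some prefix of } f \text{ of length} \geq k\}$; each $C_k$ is co-safe because it is positively determined by a finite prefix, and $\bigcap_k C_k$ is exactly the set of traces on which $v$ reads $\tr$ cofinally, i.e.\ $\{f : \limsup v(f) = \tr\} = P$.

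\textbf{The main obstacle} I anticipate is the bookkeeping in the if direction around ``infinitely often'' versus ``eventually.'' Because $\B_t$ has only two values and the verdict is unrestricted, $\limsup v(f) = \tr$ means $v$ reads $\tr$ at infinitely many prefixes, not that it stabilizes; I must ensure the sets $C_k$ genuinely carve out this cofinal condition and that their intersection is neither too large (catching traces where $v$ reads $\tr$ only finitely often) nor too small. The delicate point is confirming each $C_k$ is honestly co-safe --- that a single finite prefix witnessing a length-$\geq k$ $\tr$-output positively determines $C_k$ --- and that the countable intersection reconstructs $P$ exactly, using that below-monitoring pins $\tau_P(f)$ to $\tr$ precisely on the cofinal-$\tr$ traces.
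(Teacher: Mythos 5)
Your ``if'' direction is sound: the sets $C_k = \{f : v(s)=\tr \text{ for some } s \prec f \text{ with } |s| \geq k\}$ are co-safe, their intersection is exactly $\{f : \limsup v(f) = \tr\}$, and universal monitoring from below identifies this set with $P$. This is essentially the paper's own argument, which works directly with the set $S = \{s : v(s) = \tr\}$ and the characterization of response properties as those $P$ for which $f \in P$ iff infinitely many prefixes of $f$ lie in some fixed $S \subseteq \Sigma^*$; your $C_k$ are just the tail co-safety sets built from that $S$.

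The ``only if'' direction, however, has a genuine gap, and it sits exactly at the point you flagged and then asserted away. Your verdict outputs $\tr$ at $s$ only when \emph{all} of $C_1, \ldots, C_{|s|}$ are positively determined by $s$, and this diagonal condition can fail at all but finitely many prefixes even when $f \in P$, because the prefix length needed to determine $C_i$ may grow faster than $i$. Concretely, take $\Sigma = \{a,b\}$ and $C_i$ the co-safety property ``at least $i$ occurrences of $a$,'' so that $P = \bigcap_{i} C_i$ is the response property ``infinitely many $a$'s,'' and let $f = (ab)^\omega \in P$. The shortest prefix of $f$ positively determining $C_i$ has length $2i-1$, so for every prefix $s \prec f$ with $|s| = n \geq 2$ the conjunct $C_n$ is not yet determined by $s$, giving $v(s) = \fa$; hence $\limsup v(f) = \fa \neq \tr = \tau_P(f)$ and universal monitoring fails. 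The construction can be repaired by letting the index advance with successes rather than with input length: maintain a stage $k$ (a function of the prefix read so far, initially $1$), output $\tr$ at $s$ and increment $k$ exactly when $C_1,\ldots,C_k$ are all positively determined by $s$, and output $\fa$ otherwise; then every stage completes when $f \in P$, so $\tr$ occurs infinitely often, while infinitely many $\tr$'s still force $f \in \bigcap_i C_i$. The paper sidesteps this bookkeeping entirely: invoking the $S$-characterization of response properties, it sets $v(s) = \tr$ iff $s \in S$, after which both directions are immediate.
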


\begin{proof}
	Suppose $P$ is a response property, i.e., there exists a set $S \subseteq \Sigma^*$ such that for every $f \in \Sigma^\omega$ we have $f \in P$ iff infinitely many prefixes of $f$ belong to $S$.
	Let $v$ be a verdict function as follows: $v(s) = \tr$ if $s \in S$, and $v(s) = \fa$ if $s \notin S$.
	Now, let $f \in \Sigma^\omega$ be a trace.
	We have $\tau_P(f) = \tr$ iff for every $s \prec f$ there exists $r \in \Sigma^*$ such that $sr \prec f$ and $sr \in S$ iff $\limsup v(f) = \tr$.
	Therefore, $v$ universally monitors $\tau_P$ from below in $\B_t$.
	
	Now, suppose there exists a verdict function $v$ that universally monitors $\tau_P$ from below in $\B_t$.
	Because $v$ is a function on $\B_t = \{ \tr, \fa \}$, there is a set $S \subseteq \Sigma^*$ such that $v(s) = \tr$ for all $s \in S$, and $v(s) = \fa$ for all $s \notin S$.
	Then, we get that $\limsup v(f) = \tr$ iff for every $s \prec f$ there exists $r \in \Sigma^*$ such that $sr \prec f$ and $sr \in S$ iff $f \in P$.
	Observe that the set $S$ is exactly as in the definition of a response property, therefore $P$ is a response property.
\end{proof}

The proof for persistence properties is symmetric.

\begin{theorem} \label{thm:pers}
	A boolean property $P$ is a persistence property iff $\tau_P$ is universally monitorable
	from below by an unrestricted verdict function on $\B_f$.
\end{theorem}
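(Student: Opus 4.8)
The final statement is Theorem~\ref{thm:pers}, which claims that a boolean property $P$ is a persistence property iff $\tau_P$ is universally monitorable from below by an unrestricted verdict function on $\B_f$. The excerpt explicitly says ``The proof for persistence properties is symmetric,'' referring back to the response-property proof in Theorem~\ref{thm:resp}. So my plan is to mirror that proof, exploiting the duality between the domains $\B_t$ (where $\fa < \tr$) and $\B_f$ (where $\tr < \fa$), and the duality between the defining conditions of response and persistence properties.

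Let me recall the definitions. A \emph{persistence} property is the countable union of safety properties; equivalently, there is a set $S \subseteq \Sigma^*$ of finite traces such that $f \in P$ iff all but finitely many prefixes of $f$ lie in $S$ (i.e.\ eventually all prefixes stay in $S$). Contrast this with a response property, where $f \in P$ iff infinitely many prefixes lie in $S$. On the domain $\B_f$, where $\tr < \fa$, we have $\limsup v(f) = \tr$ exactly when $v(s) = \tr$ for all sufficiently long prefixes $s \prec f$, since $\tr$ is the \emph{smaller} value and $\limsup$ picks out the largest subsequential limit: if $v(s) = \fa$ for infinitely many prefixes then $\limsup v(f) = \fa$, and only if $v$ is eventually constantly $\tr$ do we get $\limsup v(f) = \tr$. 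This is precisely the ``eventually always in $S$'' behavior that defines persistence.

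For the \emph{only if} direction I would assume $P$ is a persistence property witnessed by the set $S$, and define $v(s) = \tr$ if $s \in S$ and $v(s) = \fa$ if $s \notin S$. Then I would argue that $\tau_P(f) = \tr$ iff eventually every prefix $sr \prec f$ lies in $S$ iff (reading off the domain $\B_f$ as above) $\limsup v(f) = \tr$; and $\tau_P(f) = \fa$ iff infinitely many prefixes lie outside $S$ iff $\limsup v(f) = \fa$. Hence $\limsup v(f) = \tau_P(f)$ for every $f$, so $v$ universally monitors $\tau_P$ from below. For the \emph{if} direction I would start from an arbitrary unrestricted verdict $v$ on $\B_f$ that universally monitors $\tau_P$ from below, read off the set $S = \{s \in \Sigma^* : v(s) = \tr\}$, and run the same equivalence in reverse to conclude that $f \in P$ iff all but finitely many prefixes of $f$ belong to $S$, which exhibits $S$ as a witness that $P$ is a persistence property.

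The only genuine subtlety---and the one step I would be careful to get right rather than wave through---is the characterization of $\limsup v(f)$ on $\B_f$, where the reversed order $\tr < \fa$ inverts one's intuition: here $\limsup$ over the prefix sequence equals $\tr$ \emph{only} when $v$ is eventually constantly $\tr$, matching the ``finitely many exceptions'' clause of the persistence definition, whereas in Theorem~\ref{thm:resp} on $\B_t$ the symmetric role was played by $\limsup = \tr$ matching ``infinitely many'' occurrences. Once this correspondence between the order on the domain and the quantifier structure of the property class is pinned down, everything else is a transcription of the response-property argument with $\B_t$ replaced by $\B_f$ and ``infinitely often in $S$'' replaced by ``eventually always in $S$,'' so no new technical machinery is required.
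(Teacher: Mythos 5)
Your proposal is correct and is exactly the argument the paper intends when it says the proof is ``symmetric'' to Theorem~\ref{thm:resp}: you transcribe that proof with $\B_t$ replaced by $\B_f$, the prefix-set characterization ``infinitely many prefixes in $S$'' replaced by ``all but finitely many prefixes in $S$,'' and you correctly pin down the key fact that on $\B_f$ (where $\tr < \fa$) one has $\limsup v(f) = \tr$ iff $v$ is eventually constantly $\tr$ along the prefixes of $f$.
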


Positive, finite boolean combinations of response and persistence properties are called \emph{reactivity} properties \cite{Chang93}.
We consider existential monitorability in $\B_\bot$ by unrestricted verdict functions, and provide a lower bound.

\begin{theorem} \label{thm:react}
	For every boolean reactivity property $P$,
	we have that $\tau_P$ is existentially monitorable from below by an unrestricted verdict function on $\B_\bot$.
\end{theorem}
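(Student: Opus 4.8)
The plan is to reduce $P$ to its \emph{reactivity normal form} $P = \bigcap_{i=1}^{n}(R_i \cup Q_i)$, where each $R_i$ is a response property given by a set $S_i^+ \subseteq \Sigma^*$ (so $f \in R_i$ iff infinitely many prefixes of $f$ lie in $S_i^+$) and each $Q_i$ is a persistence property given by a set $S_i^- \subseteq \Sigma^*$ of ``bad'' prefixes (so $f \notin Q_i$ iff infinitely many prefixes of $f$ lie in $S_i^-$). Such a form exists because response and persistence are dual and each is closed under finite unions and intersections, so the positive boolean closure collapses to a conjunction of disjuncts of this shape. I would then recall the two base detectors implicit in the proofs of Theorems~\ref{thm:resp} and~\ref{thm:pers}: on $\B_\bot$, emitting $\tr$ exactly on the prefixes in $S_i^+$ yields $\limsup = \tr$ iff $f \in R_i$, while emitting $\fa$ exactly on the prefixes in $S_i^-$ yields $\limsup = \fa$ iff $f \notin Q_i$. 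The conceptual difficulty, which is exactly why only \emph{existential} monitorability is claimed, is that within a single disjunct $R_i \cup Q_i$ the satisfying part ($f \in Q_i$, an $\mathsf{FG}$-condition) is not detectable in the $\limsup$, and the violating part ($f \notin R_i$, again an $\mathsf{FG}$-condition) is not detectable either; response detectors ``want'' the domain $\B_t$ and persistence detectors ``want'' $\B_f$, and it is the incompatibility of these two orientations that forces the inconclusive value $\bot$ and puts a universal construction out of reach.

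For the construction I would build a single unrestricted verdict $v : \Sigma^* \to \B_\bot$ that dovetails the base detectors across the $n$ conjuncts using \emph{reluctant commitment}. For each conjunct $i$, the verdict $v$ runs a violation sub-monitor that accumulates bad prefixes from $S_i^-$ but is \emph{reset} every time a good prefix from $S_i^+$ is seen; only a sub-monitor whose accumulated, unreset evidence has persisted long enough (according to a fixed round-robin schedule over $i = 1, \dots, n$) is allowed to drive the global output to $\fa$. Dually, $v$ emits $\tr$ only when a scheduled sweep confirms, recurrently, a good prefix for every conjunct. Everywhere else $v = \bot$. The schedule is arranged so that along any single trace $v$ eventually stabilizes into at most one of the two modes, i.e. it never emits both $\tr$ and $\fa$ infinitely often; this is what guarantees that $\{v(s) : s \in \pref{f}\}$ has a supremum and hence that $v$ is a legitimate verdict on $\B_\bot$.

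The main obstacle is the \textbf{soundness} requirement $\limsup v(f) \le \tau_P(f)$ for \emph{all} $f$, within which the $\fa$-case is the crux. I would have to show that if $v$ emits $\fa$ cofinally, then the conjunct $i$ responsible for it has only finitely many prefixes in $S_i^+$ (so $f \notin R_i$) and infinitely many prefixes in $S_i^-$ (so $f \notin Q_i$), whence $f \notin R_i \cup Q_i$ and $f \notin P$. The reset-on-progress mechanism is precisely what enforces the first half: were $S_i^+$ hit infinitely often, the sub-monitor for $i$ would be reset infinitely often and could not keep the global output at $\fa$ in the limit. Making this airtight against \emph{sparse-but-infinite} good prefixes, where the gaps between successive good events grow without bound, is the delicate point, and is the reason the commitment must be tied to the resets themselves rather than to any fixed numerical threshold; the set $\{f : \limsup v(f) = \fa\}$ then becomes a response property genuinely contained in $\overline{P}$. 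The $\tr$-soundness is the easier dual: a cofinal $\tr$ forces, for each conjunct, infinitely many good prefixes, giving $f \in R_i \cup Q_i$ for all $i$.

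Finally, for \textbf{informativeness} I would verify condition (ii) of existential monitorability by exhibiting, for each $s \in \Sigma^*$, a continuation on which $v$ is conclusive and correct. If $P$ is reachable from $s$, I choose a continuation that drives every reachable conjunct to be satisfied recurrently through its response part, so that the scheduled sweep locks $v$ onto $\tr$ and $\limsup v(sf) = \tr = \tau_P(sf)$. If no continuation of $s$ lies in $P$, then every continuation has truth value $\fa$, and I choose one that permanently stops the good events of some fixed conjunct while pumping its bad events, so that the corresponding violation sub-monitor is never reset again and drives $v$ onto $\fa$. Together with soundness, this establishes that $v$ existentially monitors $\tau_P$ from below on $\B_\bot$.
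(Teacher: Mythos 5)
Your overall architecture (reactivity normal form, base detectors from Theorems~\ref{thm:resp} and~\ref{thm:pers}, a $\tr$-sweep that fires when every conjunct has produced a fresh good prefix) matches the paper's, but your $\fa$-side has a genuine gap, precisely at the point you flag as delicate. The reset-based commitment cannot be made sound: any rule that triggers $\fa$ infinitely often whenever bad evidence accumulates forever without a reset (which you need, since after a doomed prefix the good prefixes of the responsible conjunct may simply never occur again, so resets stop and only accumulation remains) is defeated by a diagonalization. Fix your rule; then build a trace block by block, where after each good prefix of conjunct $i$ you append enough bad prefixes to exceed whatever commitment condition the current, frozen reset history dictates, and only then insert the next good prefix. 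The resulting trace has infinitely many good prefixes, hence lies in $R_i$ and can be arranged to lie in $P$, yet your sub-monitor commits to $\fa$ inside every block; since the $\tr$-sweep also fires at the infinitely many good prefixes, $v$ emits both $\tr$ and $\fa$ infinitely often and is not even a well-defined verdict on $\B_\bot$ (Remark~\ref{rem:allornothing}), and any priority scheme that suppresses one side converts this into unsoundness on a trace in $P$. The obstruction is semantic, not a matter of scheduling: ``only finitely many prefixes in $S_i^+$'' is a $\Sigma_2$-type condition that no observational counting rule can certify in the limit. The paper's proof avoids the issue entirely by gating $\fa$-emissions for conjunct $i$ on the condition that the current prefix \emph{negatively determines} $R_i$ (no continuation at all can satisfy $R_i$); after that point, emitting $\fa$ at each fresh bad prefix of the persistence component is trivially sound, because $f \notin R_i$ is already guaranteed for every possible future.

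The same missing idea breaks your informativeness argument. Your case split is ``$P$ reachable from $s$'' versus ``no continuation of $s$ lies in $P$,'' but in the first case $P$ may be reachable only through a persistence part, i.e., some $R_i$ is negatively determined while $Q_i$ is still satisfiable. On every continuation witnessing $P$, conjunct $i$ produces no further good prefixes, so your sweep (which demands recurrent good prefixes for \emph{every} conjunct) never fires, and the verdict's $\limsup$ is $\bot \neq \tr$ on exactly the continuations you propose as witnesses. The paper resolves this by exploiting that the existential witness need not satisfy $P$: when $R_i$ is dead and $Q_i$ is not positively determined, it picks a continuation that pumps the bad prefixes of the persistence component, which violates $P$ and on which the (determination-gated) $\fa$-emissions converge correctly; and when $Q_i$ does become positively determined, conjunct $i$ is dropped from the sweep and monitoring resumes on the remaining conjuncts. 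Both repairs are driven by positive/negative determination --- semantic conditions on all continuations of a prefix --- rather than by observed good/bad counts, and that substitution is the key step your proposal is missing.
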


\begin{proof}
	Suppose $P$ is a boolean reactivity property, i.e., $P = \bigcap_{i=1}^k (R_i \cup P_i)$ for some $k \geq 1$ where $R_i$ is a response and $P_i$ is a persistence property for every $1 \leq i \leq k$.
	By Theorem~\ref{thm:resp}, each $\tau_{R_i}$ is universally monitorable from below by a verdict function $u_i$ on $\B_t$.
	For each $1 \leq i \leq k$, consider the verdict function $v_i$ on $\B_\bot$ defined as follows: let $v_i(s) = \tr$ if $u_i(s) = \tr$ or $s$ positively determines $R_i$, let $v_i(s) = \fa$ if $s$ negatively determines $R_i$, and $v_i(s) = \bot$ otherwise.
	Note that each $v_i$ existentially monitors $\tau_{R_i}$ from below, and for every $f \in \Sigma^\omega$, if $f \in R_i$ for every $1 \leq i \leq k$, then $f \in P$.
	Then, we can construct the verdict $v$ to monitor $\tau_P$:
	Let $v(\varepsilon) = \tr$ and let $x$ be a memory for $v$ that initially contains $\varepsilon$.
	On non-empty traces, $v$ outputs $\bot$ until it observes a trace $s$ such that for every $1 \leq i \leq k$ there exists $r_i$ such that $x \prec r_i \preceq s$ and $v_i(r_i) = \tr$.
	When $v$ reads such a trace $s$, it outputs $\tr$, updates $x$ to store $s$, and outputs $\bot$ until the next trace that satisfies the condition above.
	Observe that, for every $f$, if $\limsup v(f) = \tr$ then $\tau_P(f) = \tr$; and for every $s$ there exists $f$ such that $\limsup v(sf) = \tau_P(sf)$ unless $R_i$ is negatively determined for some $1 \leq i \leq k$.
	
	If for some response component $R_i$ is negatively determined, then we can switch to monitor corresponding persistence component instead.
	Consider the verdict  $u_i'$ for $\tau_{P_i}$ on $\B_f$ and construct $v_i'$ on $\B_\bot$ as follows: let $v_i'(s) = \fa$ if $u_i'(s) = \fa$ or $s$ negatively determines $P_i$, let $v_i'(s) = \tr$ if $s$ positively determines $P_i$, and $v_i'(s) = \bot$ otherwise.
	One can verify that for every $f$, if $\limsup v_i'(f) = \fa$ then $\tau_P(f) = \fa$; and for every $s$ there exists $f$ such that $\limsup v_i'(sf) = \tau_P(sf)$ unless $P_i$ is positively determined.
	Once $P_i$ is positively determined, we know that all possible future traces satisfy $R_i \cup P_i$.
	Then, we can switch to the previous procedure of monitoring the response components, excluding $R_i$, and repeat as many times as necessary.
\end{proof}

We now demonstrate that Theorem~\ref{thm:react} is indeed a lower bound for the capabilities of
existential monitors in $\B_\bot$.

\begin{example} \label{ex:react}
	Let $P = \bigcup_{i \in \N} R_i$ such that each $R_i$ is a response property.
	In particular, the property $P$ belongs to the class of $G_{\delta\sigma}$ sets in the Borel hierarchy, which strictly contains the reactivity properties.
	Moreover, suppose that for some $j \in \N$ property $R_j$ is live.
	Let $u$ be a verdict on $\B_t$ for $\tau_{R_j}$.
	We construct a verdict $v$ on $\B_\bot$ for $\tau_P$ as follows: $v(s) = \tr$ if $u(s) = \tr$, and $v(s) = \bot$ otherwise.
	Clearly, for every $f \in \Sigma^\omega$, if $f \in R_j$ then $f \in P$, and thus $\limsup v(f) \leq \tau_P(f)$.
	Moreover, since $R_j$ is live, so is $P$, i.e., every finite trace $s$ can be extended with some $f$ such that $sf \in P$,
	and thus $\limsup v(sf) = \tau_P(sf)$.
	It follows that $P$ is existentially monitorable from below by $v$ on $\B_\bot$.
\end{example}

The remaining combinations of monitoring modality and value domain allow us to monitor every boolean property.

\begin{theorem} \label{thm:monoexbt}
	For every boolean property $P$,
	we have that $\tau_P$ is existentially monitorable from below by a monotonically increasing verdict function on $\B_t$.
\end{theorem}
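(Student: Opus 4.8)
The plan is to use the natural verdict that commits to $\tr$ exactly once a prefix has settled membership positively. Concretely, I would define $v$ on $\B_t$ by $v(s) = \tr$ if $s$ positively determines $P$, and $v(s) = \fa$ otherwise. The argument hinges on one structural fact: if $s$ positively determines $P$, then so does every extension $sr$, because $\{srg \mid g \in \Sigma^\omega\} \subseteq \{sf \mid f \in \Sigma^\omega\} \subseteq P$. This closure under extension will do double duty below.

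First I would verify monotonicity and the shape of the limit. By the closure fact, once $v$ outputs $\tr$ on a prefix it outputs $\tr$ on all longer prefixes; since $\fa < \tr$ in $\B_t$, this is precisely monotone increase, so the verdict sequence along any $f$ is monotone. Hence $\limsup v(f) = \tr$ iff some prefix of $f$ positively determines $P$, and $\limsup v(f) = \fa$ otherwise. Clause (i) of existential monitorability (approximation from below) is then immediate: if $\limsup v(f) = \tr$, some prefix $s \prec f$ positively determines $P$, so $f \in P$ and $\tau_P(f) = \tr$; if $\limsup v(f) = \fa$, the bound $\fa \le \tau_P(f)$ holds because $\fa$ is least. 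Thus $\limsup v(f) \le \tau_P(f)$ for every $f$.

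For clause (ii), given $s$ I must exhibit an $f$ with $\limsup v(sf) = \tau_P(sf)$. If $s$ positively determines $P$, then $v(s) = \tr$ forces $\limsup v(sf) = \tr$ while $sf \in P$ gives $\tau_P(sf) = \tr$, so any $f$ works. If $s$ does not positively determine $P$, then by definition some $g$ satisfies $sg \notin P$; choosing $f = g$ yields $\tau_P(sf) = \fa$, and no prefix $t$ of $sf$ can positively determine $P$ (else $sf$, an extension of $t$, would lie in $P$), so $v$ stays $\fa$ along $sf$ and $\limsup v(sf) = \fa = \tau_P(sf)$.

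I do not anticipate a real obstacle: the statement holds for arbitrary $P$ precisely because $\B_t$ is so permissive---the irrevocable $\tr$ need only fire when membership is already guaranteed, while $\fa$ is justifiable along a suitably chosen bad continuation. The only point needing care is clause (ii) in the negative case, where matching verdict and property value rests on the observation that $sf \notin P$ prevents any prefix of $sf$ from having positively determined $P$.
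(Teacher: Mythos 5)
Your proposal is correct and uses exactly the paper's construction: the verdict $v(s)=\tr$ iff $s$ positively determines $P$, with the same case analysis for clause (ii). Your writeup is just slightly more explicit than the paper's (spelling out why no prefix of $sf$ can positively determine $P$ in the negative case), but the argument is the same.
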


\begin{proof}
	Let $v$ be a verdict function such that $v(s) = \tr$ if $s$ positively determines $P$, and $v(s) = \fa$ otherwise.
	Observe that $v$ is indeed monotonically increasing in $\B_t$, and $\limsup v(f) \leq \tau_P(f)$ for every $f \in \Sigma^\omega$.
	Let $s \in \Sigma^*$ be an arbitrary trace.
	If $v(s) = \tr$, then $\tau_P(sf) = \tr$ for every continuation $f \in \Sigma^\omega$; otherwise, there exists some $f \in \Sigma^\omega$ such that $\tau_P(sf) = \fa$.
	It implies that for every $s \in \Sigma^*$ there exists $f \in \Sigma^\omega$ such that $\limsup v(sf) = \tau_P(sf)$.
	Therefore, $\tau_P$ is existentially monitorable from below by $v$.
\end{proof}

Combining Proposition~\ref{pro:exmonbtbf} and Theorem~\ref{thm:monoexbt}, we carry this result over to domain $\B_f$.

\begin{corollary} \label{cor:monoexbf}
	For every boolean property $P$,
	we have that $\tau_P$ is existentially monitorable from below by a monotonically increasing verdict function on $\B_f$.
\end{corollary}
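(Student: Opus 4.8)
The plan is to obtain this statement as an immediate consequence of the two results that directly precede it, namely Theorem~\ref{thm:monoexbt} and Proposition~\ref{pro:exmonbtbf}, rather than reconstructing a verdict function from scratch. The structure is a two-step chain: first establish existential monitorability from below on $\B_t$, then transport it to $\B_f$ via the biconditional supplied by the proposition.

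Concretely, I would fix an arbitrary boolean property $P$ together with its characteristic quantitative property $\tau_P$. Applying Theorem~\ref{thm:monoexbt} to $P$ yields that $\tau_P$ is existentially monitorable from below by a monotonically increasing verdict function on $\B_t$. Since Proposition~\ref{pro:exmonbtbf} asserts that, for \emph{every} boolean property, this condition is equivalent to the corresponding condition on $\B_f$, I would then invoke its forward (``only if'') direction to conclude that $\tau_P$ is existentially monitorable from below by a monotonically increasing verdict function on $\B_f$, which is exactly the claim.

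The only point requiring any care is that both invoked results are stated for every boolean property without side conditions, so the chaining is unconditional and no hypothesis needs to be checked for the particular $P$ at hand. Consequently there is no real obstacle: all of the combinatorial work---the positive/negative-determination construction underlying the $\B_t$ verdict, and the symmetric reformulation that exchanges the roles of $\tr$ and $\fa$---has already been discharged in the proofs of Theorem~\ref{thm:monoexbt} and Proposition~\ref{pro:exmonbtbf}. If a self-contained argument were preferred instead, one could mirror the $\B_t$ construction directly on $\B_f$ by letting the verdict output the dominant value $\fa$ precisely when the observed prefix negatively determines $P$, and then checking that $\limsup$ over every trace is bounded above by $\tau_P$ while some extension of each finite prefix attains equality; but invoking the established equivalence is the cleaner route, and is what I would present.
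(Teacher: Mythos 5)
Your proposal matches the paper's own derivation exactly: the paper states the corollary as an immediate consequence of combining Theorem~\ref{thm:monoexbt} with Proposition~\ref{pro:exmonbtbf}, which is precisely the two-step chain you present. The argument is correct, and your fallback sketch of a direct construction on $\B_f$ (outputting $\fa$ exactly when the prefix negatively determines $P$) is also consistent with what the equivalence proof does internally.
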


Tables~\ref{tab:univ}~and~\ref{tab:exis} summarize the results of this section.
The classes of safety, co-safety, obligation, response, persistence, reactivity, and classically monitorable boolean properties
are denoted by $\mathsf{Safe}$, $\mathsf{CoSafe}$, $\mathsf{Obl}$, $\mathsf{Resp}$, $\mathsf{Pers}$, $\mathsf{React}$,
and $\mathsf{Mon}$, respectively.
We note that the upper bound for unrestricted existential monitors on $\B_\bot$ is an open problem.

\begin{table}[ht] 
	\centering
	\caption{Correspondence between classes of boolean properties and universal monitorability. \label{tab:univ}}
	\scalebox{0.94}{
		\begin{tabular}[t]{cccc}
			\multicolumn{1}{c}{} & \multicolumn{3}{c}{Universally monitorable from below} \\ \cmidrule{2-2} \cmidrule{4-4}
			\multicolumn{1}{l}{$\D$} & Monotonically increasing && Unrestricted verdict \\ \hline
			\multicolumn{1}{l}{$\B$} & $\emptyset$ or $\Sigma^\omega$ (Rem.~\ref{rem:allornothing}) && $\mathsf{Obl}$ (Thm.~\ref{thm:obl}) \\ 
			\multicolumn{1}{l}{$\B_\bot$} & $\mathsf{Safe} \cap \mathsf{CoSafe}$ (Thm.~\ref{thm:safencosafe}) && $\mathsf{Obl}$ (Prop.~\ref{pro:equiv}~+~Thm.~\ref{thm:obl})\\ 
			\multicolumn{1}{l}{$\B_t$} & $\mathsf{CoSafe}$ (Thm.~\ref{thm:safe+cosafe}) && $\mathsf{Resp}$ (Thm.~\ref{thm:resp}) \\ 
			\multicolumn{1}{l}{$\B_f$} & $\mathsf{Safe}$ (Thm.~\ref{thm:safe+cosafe}) && $\mathsf{Pers}$ (Thm.~\ref{thm:pers}) \\
	\end{tabular}}
\end{table}

\vspace*{-0.16cm}

\begin{table}[h!] 
	\centering
	\caption{Correspondence between classes of boolean properties and existential monitorability. \label{tab:exis}}
	\scalebox{0.94}{
		\begin{tabular}[t]{cccc}
			\multicolumn{1}{c}{} & \multicolumn{3}{c}{Existentially monitorable from below} \\ \cmidrule{2-2} \cmidrule{4-4}
			\multicolumn{1}{l}{$\D$} & Monotonically increasing && Unrestricted verdict \\ \hline
			\multicolumn{1}{l}{$\B$} & $\emptyset$ or $\Sigma^\omega$ (Rem.~\ref{rem:allornothing}) && $\mathsf{Obl}$ (Prop.~\ref{pro:equiv}~+~Thm.~\ref{thm:obl}) \\ 
			\multicolumn{1}{l}{$\B_\bot$} & $\mathsf{Mon}$ (Prop..~\ref{thm:strmon}) && at least $\mathsf{Mon \cup React}$ (Thm.~\ref{thm:react}) \\
			\multicolumn{1}{l}{$\B_t$} & any $P \subseteq \Sigma^\omega$ (Thm.~\ref{thm:monoexbt}) && any $P \subseteq \Sigma^\omega$ (Thm.~\ref{thm:monoexbt}) \\
			\multicolumn{1}{l}{$\B_f$} & any $P \subseteq \Sigma^\omega$ (Cor.~\ref{cor:monoexbf}) && any $P \subseteq \Sigma^\omega$ (Cor.~\ref{cor:monoexbf}) \\ 
	\end{tabular}}
\end{table}

We conclude the section with a simple example that demonstrates the concept of precision in the context of boolean properties.

\begin{example}
	Let $P = \Diamond (a \lor b \lor c)$, where $\Diamond$ is the \emph{eventually} operator \cite{Piterman18},
	and $\tau_P$ be the corresponding quantitative property.
	Consider the following verdict functions on $\B_t$:
	\begin{itemize}
		\item $v_a(s) = \tr$ iff $s$ contains $a$,
		\item $v_{ab}(s) = \tr$ iff $s$ contains $a$ or $b$,
		\item $v_{bc}(s) = \tr$ iff $s$ contains $b$ or $c$,
		\item $v_{abc}(s) = \tr$ iff $s$ contains $a$ or $b$ or $c$.
	\end{itemize}
	All these verdict functions are monotonic.
	Moreover, functions $v_a$, $v_{ab}$, and $v_{bc}$ existentially monitor $\tau_P$ from below while $v_{abc}$ monitors universally.
	
	Observe that $v_{ab}$ is more precise than $v_a$, because for every finite prefix $s$ that yields $v_a(s) = \tr$,
	we also get $v_{ab}(s) = \tr$, but not vice versa, considering the traces that contain $b$ but not $a$.
	However, we cannot compare $v_{ab}$ and $v_{bc}$, as for every  $s \prec a^\omega$,
	we have $v_{bc}(s) \leq v_{ab}(s)$ and $\limsup v_{bc}(a^\omega) \leq \limsup v_{ab}(a^\omega)$, and vice versa for $c^\omega$.
	Finally, $v_{abc}$ is the most precise among these verdicts as it universally monitors $\tau_P$. 
\end{example}

\section{Approximate Register Monitors} \label{sect:reg}
\subsection{Verdicts generated by register machines}

In this section, we follow \cite{Ferrere20} to define \emph{register machines} as a model for generating an output stream
that represents a verdict sequence for monitoring quantitative properties.
We consider a set of integer-valued \emph{registers} denoted $X$.
A \emph{valuation} $\mathsf{v} : X \to \Z$ is a mapping from the set of registers to integers.
An \emph{update} is a function from valuations to valuations,
and a \emph{test} is a function from valuations to $\B$.
The set of updates over $X$ is denoted by $\Gamma(X)$, and the set of tests by $\Phi(X)$.
We describe updates and tests over $X$ using integer- and boolean-valued expressions, called \emph{instructions}. 

\begin{definition}
	A (deterministic) \emph{register machine} is a tuple $M\!=\!(X,Q,\Sigma,\Delta,q_0,\D,\lambda)$,
	where $X$ is a finite set of registers,
	$Q$ is a finite set of states,
	$\Sigma$ is a finite alphabet,
	$\Delta \subseteq Q \times \Sigma \times \Phi(X) \times \Gamma(X) \times Q$ is a set of edges,
	$q_0 \in Q$ is the initial state.
	$\D$ is an output value domain, and 
	$\lambda : Q \times \Z^{|X|} \to \D$ is an output function.
	Moreover, for every state $q \in Q$, letter $\sigma \in \Sigma$, and valuation $\mathsf{v}$,
	there is exactly one outgoing edge $(q,\sigma,\phi,\gamma,q') \in \Delta$ with $\mathsf{v}\models\phi$.
\end{definition}

Let $M=(X,Q,\Sigma,\Delta,q_0,\D,\lambda)$ be a register machine. 
A pair consisting of a state $q \in Q$ and a valuation $\mathsf{v} : X \to \Z$ constitute a \emph{configuration} of $M$.
The initial configuration $(q_0,\mathsf{v_0})$ of $M$ is such that $\mathsf{v_0}(x) = 0$ for every $x \in X$.
Between two configurations of $M$, the \emph{transition} relation is defined by
$(q,\mathsf{v}) \xrightarrow{\sigma} (q',\mathsf{v'})$ iff
there exists an edge $(q,\sigma,\phi,\gamma,q') \in \Delta$ such that
$\mathsf{v} \models \phi$ and $\mathsf{v'}=\gamma(\mathsf{v})$.
On an infinite word $f=\sigma_1 \sigma_2 \ldots$,
the machine $M$ produces an infinite sequence of transitions
$(q_0,\mathsf{v_0}) \xrightarrow{\sigma_1} (q_1,\mathsf{v_1}) \xrightarrow{\sigma_2} \cdots$,
and an infinite \emph{output sequence} $(\lambda(q_i,\mathsf{v_i}))_{i \in \N}$.

\begin{definition}
	A register machine $M \hspace{-.25em}=\hspace{-.25em} (X,Q,\Sigma,\Delta,q_0,\D,\lambda)$ \emph{generates} the verdict function $v : \Sigma^* \to \D$ iff,
	for every finite trace $s \in \Sigma^*$, the machine $M$ after reading $s$ reaches a configuration $(q,\mathsf{v})$ such that
	$\lambda(q,\mathsf{v}) = v(s)$.
\end{definition}

We mainly focus on a simple form of register machines which can only increment, reset, and compare registers.
The according instruction set is denoted by $\langle 0, +1, \geq \rangle$, and equivalent to the instruction set
$\langle 0, +1, -1, \geq\!0 \rangle$, as was shown in \cite{Ferrere18}.

\begin{definition}
	A \emph{counter machine} is a register machine with the instructions $x \gets 0$, $x \gets x + 1$, and $x \geq y$
	for registers $x,y \in X$, and an output function that in every state outputs 0, $\infty$, or one of the register values.
	A verdict function $v$ is a \emph{$k$-counter verdict function} iff
	$v$ is generated by a counter machine $M$ with $k$ registers.
	A quantitative property $p$ is \emph{$k$-counter monitorable} iff
	there is a $k$-counter verdict function that monitors $p$.
\end{definition}

Note that we can use the various modalities of monitoring defined in Section~\ref{sect:defn}.
For instance, a property $p$ is existentially $k$-counter monitorable from below iff
$p$ is $k$-counter monitorable and the witnessing verdict function existentially monitors $p$ from below.

One can also define \emph{extended counter machines} with generic output functions.
For example, a verdict function generated by an extended 3-counter machine (with an output function that can perform division)
can universally monitor the average response-time property from above, as demonstrated in Example~\ref{ex:art}.

We remark that our model of register machines is more general than register transducer models operating over uninterpreted infinite alphabets, which typically cannot count (see, e.g., \cite{Khalimov18}).

\subsection{Precision-resource trade-offs for register machines}

In the following example, we illustrate how the arithmetic operations of register machines can play a role in
precision-resource trade-offs for monitoring.

\begin{example} [Adders versus counters]
	Let $\Sigma=\{a,b\}$ and $\D = \N \cup \{\infty\}$.
	Let $p$ be a property such that $p(f) = 2^n$, where $n$ is the length of the longest uninterrupted sequence of
	$a$'s in $f \in \Sigma^\omega$.
	Consider a 2-register machine $M$ with the following instructions:
	$x \gets 1$,  $x \gets x + y$, and $x \geq y$ for $x,y \in X$.
	When $M$ starts reading a segment of $a$'s, it resets one of its registers, say $x$, to 1 and doubles its value
	after each $a$.
	After the segment ends, it compares the value of $x$ with the other register, say $y$, and stores the maximum in $y$,
	which determines the output value.
	This way $M$ can generate $v_{\textit{add}}$ such that $v_{\textit{add}}(s) = 2^n$,
	where $n$ is the length of the longest uninterrupted sequence of $a$'s in $s \in \Sigma^*$.
	Verdict $v_{\textit{add}}$ is monotonically increasing and it universally monitors $p$.
	Now, suppose that we have, instead, a verdict $v_{\textit{count}}$ that is generated by a 2-counter machine.
	Since the counter values can only grow linearly, we can have $v_{\textit{count}}(s) = 2n$ for $n$ as above.
	Although it grows much slower, $v_{\textit{count}}$ existentially monitors $p$ from below,
	because the extension $a^\omega$ yields $\limsup v_{\textit{count}}(s a^\omega) = p(s a^\omega) = \infty$ for every $s \in \Sigma^*$.
	Since $v_{\textit{add}}$ universally monitors $p$, it is clearly more precise than $v_{\textit{count}}$.
\end{example}

Recall the two-pair maximal response-time property from Section~\ref{sect:intro}.
We can generalize this property
to give an example for a precision-resource trade-off
on the number of counter registers that are available for monitoring.

\begin{example} [Counter machine]
	Let $k \in \N$ and let $\Sigma_k = \{\req_1,\ack_1,\ldots,\req_k,\ack_k,\other\}$.
	The $k$-pair maximal response-time property $p : \Sigma^\omega \to (\N \cup \{\infty\})^k$
	specifies the maximal response times for all $(\req,\ack)$ pairs in $\Sigma_k$.
	More explicitly, for every $1 \leq i \leq k$, let $p_i$ specify the maximal response-time for pair
	$(\req_i,\ack_i)$ as in Example~\ref{ex:mrt}, and let $p(f) = (p_1(f),\ldots,p_k(f))$ for every $f \in \Sigma^\omega$.
	As hinted in Section~\ref{sect:intro}, there is a $2k$-counter verdict function $v_{2k}$ which simply
	combines the 2-counter verdict functions $u_i$ that universally monitor $p_i$ for all $1 \leq i \leq k$. 
	Specifically, $v_{2k}(s) = (u_1(s), \ldots, u_k(s))$ for every $s \in \Sigma^*$.
	
	Observe that a $(k+1)$-counter verdict function $v_{k+1}$ cannot universally monitor $p$, because whenever it reads a trace that contains more than one
	active request, it needs to either ignore some active requests and process only one of them, or forget the maximal
	response time for some pairs and use those counters to process the active requests.
	However, it can existentially monitor $p$ from below, because every finite trace can be extended with a continuation
	$f$ in which all previously active requests are acknowledged and the true maxima occur in $f$ one by one.
	If the server has at most one active request at any given time, then $k+1$ counters suffice for universal monitoring.
	This is because it only needs to use one register to process the current response time while storing the maxima in the
	remaining $k$ counters.
	Let $p'$ be the variant of $p$ under the assumption of no simultaneous requests,
	and suppose we have a $(\frac{k}{2}+1)$-counter verdict function $v_{\frac{k}{2}+1}$.
	By the same reason that $v_{k+1}$ cannot universally monitor $p$,
	the function $v_{\frac{k}{2}+1}$ cannot universally monitor $p'$.
	However, for every odd number $1 \leq i < k$, we can assign one counter to store $\max(u_i(s), u_{i+1}(s))$,
	which provides an over-approximation for either $p_i$ or $p_{i+1}$ while being precise for the other.
	Overall, although it can provide a meaningful approximation, the function $v_{\frac{k}{2}+1}$ is less precise than $v_{k+1}$.
\end{example}

The following theorems generalize this example.

\begin{theorem} \label{thm:ctrprecise}
	For every $k > 1$, there exists a quantitative property $p_k$ such that $p_k$ is universally monitorable
	by a monotonically decreasing $k$-counter verdict function $v_k$.
	Moreover, for every $\ell < k$ and every monotonically decreasing $\ell$-counter verdict function $v_\ell$ that
	approximately monitors $p_k$ from below (resp. above),
	there exists a monotonically decreasing $(\ell+1)$-counter verdict function $v_{\ell+1}$ that approximately monitors $p_k$ from below (resp. above) such that
	$v_{\ell+1}$ is more precise than $v_\ell$.
\end{theorem}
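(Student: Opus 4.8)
The plan is to take $p_k$ to be a $k$-coordinate quantitative property over an alphabet $\Sigma_k$ carrying $k$ event kinds, with value domain the product lattice $(\N\cup\{\infty\})^k$, chosen in the spirit of the response-time properties of Example~\ref{ex:mrt} but so that each coordinate is monitored by a single counter and the coordinates genuinely compete for registers. To avoid juggling the two monotonicity directions, I would first prove the statement for a monotonically \emph{increasing} verdict monitoring from below (and, symmetrically, from above), and then transfer it to the monotonically decreasing, from-below/above forms demanded by the theorem using the complement operation on the value domain (the inverse domain $\D_{\textit{inv}}$ and complement $\overbar{p}$), which swaps below with above and increasing with decreasing while preserving the ``more precise'' relation of Definition~\ref{defn:precise}. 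For the upper bound (the first sentence of the theorem) I would give the direct construction that dedicates one counter to each coordinate, maintaining that coordinate's record so far; the tuple of records is the output, the verdict is monotone, and it converges coordinatewise to $p_k(f)$, hence universally monitors $p_k$ with $k$ counters.

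The heart of the argument is a \emph{capacity lemma}: a verdict generated by a counter machine with only $\ell$ registers can be simultaneously precise on at most $\ell$ coordinates. Since the output in each coordinate is $0$, $\infty$, or the value of one register, a single configuration can exhibit at most $\ell$ distinct finite coordinate values. I would therefore design a family of ``fully loaded'' traces on which all $k$ true coordinate values are distinct, finite, and arbitrarily large, forcing any $\ell$-register verdict to coincide with $p_k$ in at most $\ell$ coordinates and to strictly under-approximate the remaining $k-\ell$. Because the verdict is monotone, a neglected coordinate can never recover, so $\limsup v_\ell$ strictly under-approximates at least one coordinate in the limit. In particular, no $\ell$-register verdict with $\ell<k$ universally monitors $p_k$, and for any given monotone $v_\ell$ there is a trace $g$ and a coordinate $i$ with $\limsup v_\ell(g)_i < p_k(g)_i$.

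Given such $v_\ell$, $g$, and $i$, I would build $v_{\ell+1}$ as the coordinatewise maximum of $v_\ell$ and a single additional counter $e$ that tracks coordinate $i$ precisely (for the from-above form, the coordinatewise minimum with an over-approximating $e$). This is realized by the product of the $\ell$-register machine for $v_\ell$ with a one-register machine for $e$, using a register comparison test to select, in coordinate $i$, whichever candidate value is larger; the result is a legal $(\ell+1)$-counter machine, its verdict is again monotone (a coordinatewise maximum of monotone functions is monotone), and it remains a valid under-approximation since the maximum of two under-approximations is itself an under-approximation. By construction $\limsup v_{\ell+1}\ge\limsup v_\ell$ coordinatewise on every trace, so $v_{\ell+1}$ is at least as precise everywhere, and on $g$ we have $\limsup v_{\ell+1}(g)_i = p_k(g)_i > \limsup v_\ell(g)_i$, so it is strictly more precise in the sense of Definition~\ref{defn:precise}.

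The step I expect to be the main obstacle is the capacity lemma in the presence of \emph{adaptivity}: an arbitrary $\ell$-register machine may reallocate its registers and choose which coordinates to follow depending on the trace it observes, so I cannot fix in advance which coordinate is lost. The argument must instead show that on the fully loaded traces the machine is \emph{forced} to commit, i.e.\ that with finitely many states and only register comparisons available it cannot postpone assigning registers to all $k$ simultaneously growing coordinates, and that monotonicity then freezes the neglect permanently. I anticipate this requires a pumping and commitment argument over configurations, tracking which register can still feed which coordinate, in order to pin down a concrete $g$ and an irrecoverably under-approximated coordinate $i$. Matching the extra counter $e$ to that $i$ while verifying that the maximum-combination never lowers precision on the other coordinates is then routine.
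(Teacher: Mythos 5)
Your proposal is correct in substance but takes a genuinely different route from the paper. The paper uses a \emph{scalar} property over $\N\cup\{\infty\}$: it fixes the safety language $P_k$ of traces all of whose prefixes $s$ satisfy $|s|_i \geq |s|_{i+1}$ for $i<k$, and sets $p_k(f)=\infty$ if $f\in P_k$ and otherwise the length of the shortest violating prefix. The verdict is then natively monotonically decreasing (output $\infty$ until a violation, then freeze), the $k$-counter upper bound stores the $k-1$ differences $|s|_i-|s|_{i+1}$ plus $|s|$, the lower bound is imported from a cited result (Thm.~4.3 of \cite{Ferrere18}: recognizing $P_k$ requires $k-1$ counters), and the improvement step is a case analysis on the \emph{internal structure} of the given $\ell$-counter machine (whether it maintains a linear function $\alpha(s)\leq|s|$). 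You instead use a vector-valued property on $(\N\cup\{\infty\})^k$, argue for increasing verdicts, and transfer by complementation --- which is sound, since order reversal exchanges increasing with decreasing, below with above, and preserves the precision relation of Definition~\ref{defn:precise}. Your lower bound and improvement step rest only on the \emph{interface} of counter machines (each output coordinate is $0$, $\infty$, or a register value), not on what the adversary machine computes internally. This buys self-containedness and a cleaner improvement step (patch one neglected coordinate), at the price of making the separation essentially a pigeonhole argument on output arity in a $k$-dimensional codomain, whereas the paper keeps a one-dimensional codomain and locates the difficulty in recognizing a single safety language. Note that product codomains are fair game here: the paper's own lead-in example ($k$-pair maximal response time) uses one.

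Two repairs, both easy. First, your property cannot literally be built from Example~\ref{ex:mrt}: one coordinate of maximal response time needs \emph{two} counters (current and record). Take instead, e.g., coordinate $i$ to be the number of occurrences of a letter $a_i$ over the alphabet $\{a_1,\dots,a_k,\other\}$; the padding letter is needed so that ``fully loaded'' traces such as $g=a_1 a_2^2\cdots a_k^k\,\other^\omega$ exist, with all $k$ true values finite, distinct, and nonzero (they need not be large). Second --- good news --- the step you flag as the main obstacle requires no pumping or commitment argument. Given any monotone $\ell$-counter verdict approximating from below, each output coordinate on $g$ is a monotone sequence in a discrete domain whose limit is finite (it cannot reach $\infty$ without violating approximation from below), hence is \emph{eventually constant} at its limit; if all $k$ limits equaled the true values, some single configuration would output $k$ distinct finite nonzero values, impossible with $\ell<k$ registers. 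Adaptivity is irrelevant because you may choose $g$ and the neglected coordinate $i$ \emph{after} being handed $v_\ell$; the theorem only demands an improvement of each given verdict. Finally, you can drop the max-selection gadget: since any prefix $s$ can be frozen by $\other^\omega$, monotonicity and from-below approximation force $v_\ell(s)_i\leq e(s)$ pointwise, so simply letting coordinate $i$ of $v_{\ell+1}$ \emph{be} the new counter $e$ (copying the other coordinates from $v_\ell$) already gives a monotone $(\ell+1)$-counter verdict that is more precise in the sense of Definition~\ref{defn:precise}; the dual replacement handles the from-above case.
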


\begin{proof}
	For convenience, we consider the $\langle 0, +1, -1, \geq\!0 \rangle$ variant of counter machines.
	Let $\Sigma_k = \{1, \ldots, k\}$.
	For every $s \in \Sigma_k^*$ and $i \in \Sigma_k$ denote by $|s|_i$ the number of occurrences of the letter $i$ in $s$.
	Consider the boolean safety property $P_k = \{f \in \Sigma_k^\omega \;|\; \forall \, 1 \leq i < k : \forall s \prec f : |s|_i \geq |s|_{i+1}\}$, and let $p_k$ be as follows:
	$p_k(f) = \infty$ if $f \in P_k$, and $p_k(f) = |r|$ otherwise, where $r$ is the shortest prefix of $f$ that negatively determines $P_k$.
	We construct a verdict function $v_k$ as follows:
	$v_k(s) = \infty$ if $s$ does not negatively determine $P_k$, and $v_k(s) = |r|$ otherwise, where $r$ is the shortest prefix of $s$ that negatively determines $P_k$. 
	The verdict $v_k$ is monotonically decreasing and it can be generated by a \\ $k$-counter machine where, for every $1 \leq i < k$ and $s \in \Sigma_k^*$, the counter $x_i$ stores $|s|_i - |s|_{i+1}$, and $x_k$ stores $|s|$. 
	Moreover, because we need $k-1$ counters to recognize $P_k$ (see Thm. 4.3 in \cite{Ferrere18})
	and one more to store the output, $p_k$ is not universally $\ell$-counter monitorable for $\ell < k$.

	Let $\ell < k$, and take a monotonically decreasing $\ell$-counter verdict function $v_\ell$ that approximately monitors $p_k$ from below.
	Note that, for every $s \in \Sigma_k^*$, if the generating counter machine does not store a linear function $\alpha(s) \leq |s|$, then $v_\ell$ can be either the constant 0 function or a function that switches from $\infty$ to 0 and never misses a violation.
	Then, we construct an $(\ell+1)$-counter machine that stores $|s|$ in the new counter.
	If $v_\ell$ is constant, it uses the rest to count $|s|_i - |s|_{i+1}$ for $1 \leq i < \ell$ and catch violations, similarly as $v_k$ above; otherwise, it outputs $|s|$ instead of 0.
	The resulting verdict $v_{\ell+1}$ is monotonically decreasing and approximately monitors $p_k$ from below.
	It is also more precise than $v_\ell$.
	Now, suppose the generating $\ell$-counter machine counts a linear function $\alpha(s) \leq |s|$.
	Since this machine cannot recognize $P_k$, there exists $f \in P_k$ such that $\lim v_\ell(f) < \infty$, i.e., $v_\ell$ incorrectly concludes that $|s|_i < |s|_{i+1}$ for some $s \prec f$ and $1 \leq i < k$.
	We construct an $(\ell+1)$-counter machine $M$ where the additional counter keeps track of $|s|_i - |s|_{i+1}$ for every $s \in \Sigma_k^*$, the output register stores $|s|$, and the rest behave the same as in $v_\ell$.
	Moreover, whenever $v_\ell$ concludes that $|s|_i < |s|_{i+1}$, the behavior of $M$ is determined by
	the correct value of $|s|_i - |s|_{i+1}$ stored in the new counter.	
	It yields that the verdict $v_{\ell+1}$ generated by $M$ is more precise than $v_\ell$ because $\lim v_\ell(f)  < \lim v_{\ell+1}(f)$ for some trace $f \in \Sigma^\omega$.
	The case for monitoring from above is similar.
\end{proof}

Since monitoring $p_k$ in the proof above involves recognizing a boolean property $P_k$, the counter machine for $p_k$ must be able to distinguish traces with respect to $P_k$.
To achieve this, intuitively, the machine needs a counter for each ``independent'' quantity.
Also, the use of a variable-size alphabet $\Sigma_k$ is merely a convenience.
We can encode every word over $\Sigma_k$ in binary with the help of an additional separator symbol.
More explicitly, we can take a ternary alphabet $\Sigma = \{0,1,\#\}$ to represent every letter in $\Sigma_k$ as
a binary sequence and separate the sequences by $\#$.
We combine these observations to construct a quantitative property for which counting does not suffice for
universal monitoring no matter the number of registers, but each additional register gives a better approximation.

\begin{theorem}
	There exists a quantitative property $p$ such that for every $k > 1$ and every $k$-counter verdict function $v_k$
	that approximately monitors $p$ from below (resp. above),
	there exists a $(k+1)$-counter verdict function $v_{k+1}$ that approximately monitors $p$ from below (resp. above) and
	is more precise than $v_k$.
\end{theorem}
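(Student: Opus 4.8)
The plan is to realize the whole family $\{p_k\}$ from Theorem~\ref{thm:ctrprecise} inside a single property over the fixed ternary alphabet $\Sigma=\{0,1,\#\}$, using the encoding sketched just before the statement. First I would parse every trace into maximal $\#$-free blocks, read each block as the binary encoding of a letter $i\ge 1$, and let $c_i(s)$ count the complete blocks of $s$ that decode to $i$. I then define the boolean safety property $P=\{f\mid \forall i\ge 1,\ \forall s\prec f:\ c_i(s)\ge c_{i+1}(s)\}$ and set $p(f)=\infty$ if $f\in P$ and $p(f)=|r|$ otherwise, where $r$ is the shortest violating prefix of $f$. The point of the encoding is that $P$ now carries \emph{infinitely many} independent prefix-count constraints (one per index $i$), whereas $P_k$ in Theorem~\ref{thm:ctrprecise} carried only $k-1$. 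Restricting $p$ to traces whose blocks decode into $\{1,\dots,m\}$ recovers an encoded copy of $p_m$, and since decoding a fixed binary block is a finite-state task, the counter lower bound of \cite{Ferrere18} transfers: detecting all violations among the first $m$ letters still requires $m-1$ counters. Because $m$ is unbounded, no finite number of counters can detect every violation, so $p$ is \emph{not} universally $k$-counter monitorable for any $k$.

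For the hierarchy from above I would use monitors of the following shape: output $\infty$ while no violation has been detected among a finite set of tracked constraints, and then drop to (and stay at) the position of the first detected violation. Overshooting is harmless from above, so such a monitor is always valid: on $f\in P$ it keeps $\infty=p(f)$, and on a violating $f$ it outputs a position $\tau\ge p(f)$, giving $\liminf\ge p(f)$. Dedicating the extra register of a $(k+1)$-counter machine to tracking one more difference $c_i-c_{i+1}$ (decoding the two fixed binary strings for $i$ and $i+1$ is finite-state) lets the monitor catch a violation that the given $v_k$ detects only later or misses entirely, strictly lowering $\liminf$ on a witnessing trace while never raising it elsewhere; by Definition~\ref{defn:precise} this $v_{k+1}$ is more precise.

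For the hierarchy from below the monitors instead output $\infty$ only while every observed block decodes into the tracked set and no tracked violation has occurred, and ``give up'' to $0$ as soon as an untracked letter appears or a tracked violation is seen. This keeps $\limsup\le p$ on every trace (on violating traces the tail is $0$; on $f\in P$ any value is allowed), while yielding $\limsup=\infty$ exactly on the good traces that use only tracked letters. Enlarging the tracked set by one letter with the extra register makes the monitor keep the value $\infty$ on good traces that use that new letter, on which the given $v_k$ is forced below $\infty$; this is again a strict gain in precision with no loss elsewhere. I would also note that the from-below case can alternatively be obtained from the from-above case by passing to the inverse value domain, using the complement correspondence established earlier.

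The hard part is that the theorem quantifies over an \emph{arbitrary} $k$-counter verdict function $v_k$, not merely the canonical ones above, so I must argue that \emph{any} such $v_k$ leaves room for improvement. The key lemma I would establish is a pumping/counting bound: a machine with $k$ registers and finitely many states can maintain a separate count for only finitely many letters, because for all but finitely many indices $i$ the arbitrarily long binary block encoding $i$ must be processed ``uniformly'' and cannot be singled out to update a dedicated register. Consequently there is a smallest index $i^\ast$ whose constraint $v_k$ fails to track faithfully, exhibited by a trace on which $v_k$'s estimate is strictly worse than the true first-violation position (from above) or is forced off $\infty$ on a good trace (from below). Designing the single extra register of $v_{k+1}$ to decode exactly the two fixed strings for $i^\ast,i^\ast+1$ and to track their difference without perturbing the rest of $v_k$'s computation then secures the strict improvement. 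Making this pumping argument precise for the full $\langle 0,+1,-1,\ge 0\rangle$ instruction set, and verifying that the added register genuinely does not interfere with $v_k$'s existing behavior, is the main technical obstacle; everything else follows the template of Theorem~\ref{thm:ctrprecise}.
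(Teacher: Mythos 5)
Your proposal follows essentially the same route as the paper's own proof: the same property over $\Sigma=\{0,1,\#\}$ with $\#$-separated binary blocks encoding infinitely many prefix-count constraints $n_i \geq n_{i+1}$ (your $|r|$ versus the paper's $|r|_{\#}$ is immaterial), the same appeal to the counter lower bound of \cite{Ferrere18} to rule out universal monitoring with any finite number of counters, and the same improvement step of dedicating one additional register to a difference that the given machine fails to track while leaving its remaining computation untouched. The ``main technical obstacle'' you flag---making the argument work against an \emph{arbitrary} $k$-counter verdict function---is treated in the paper at exactly the same level of detail, namely by deferring to the reasoning of Theorem~\ref{thm:ctrprecise} and the cited lower bound, so your proposal matches the paper's proof in both structure and rigor.
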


\begin{proof}
	Let $\Sigma = \{0,1,\#\}$.
	For every $s \in \Sigma^*$ and $i \in \N$, let $n_i(s)$ denote the number of occurrences of the binary sequence that corresponds to $i$ in the longest prefix of $s$ that ends with $\#$.
	For example, if we have $s = 001\#10$, then $n_1(s)=1$ and $n_2(s)=0$.
	Similarly as in the proof of Theorem~\ref{thm:ctrprecise}, consider counter machines with instructions $\langle 0, +1, -1, \geq\!0 \rangle$ and the following boolean safety property:
	$P \!=\! \{f \in \Sigma^\omega \,|\, \forall i \in \N : \forall s \prec f : n_i(s) \geq n_{i+1}(s)\}$.
	We define the quantitative property $p$ as $p(f) = \infty$ if $f \in P$, and $p(f) = |r|_{\#}$ where $r$ is the shortest prefix of $f$ that negatively determines $P$.
	Observe that $P$ is a generalization of $P_k$ in the proof of Theorem~\ref{thm:ctrprecise}, and it requires counting infinitely many distinct quantities.
	Therefore, one can show that, to universally monitor $p$, one needs infinitely many counter registers.
	However, for every $k > 1$, there exists a $k$-counter verdict function $v_k$ that approximately monitors $p$ from below or above, for instance, by keeping track of $n_i(s) - n_{i+1}(s)$ for every $1 \leq i < k$ and counting $\#$'s in the remaining register.

	We now construct a $(k+1)$-counter verdict function $v_{k+1}$ from $v_k$.
	Suppose $v_k$ approximately monitors $p$ from below.
	Note that the generating machine of $v_k$ lacks the resources to distinguish traces with respect to $P$ correctly.
	Therefore, regardless of the monotonicity of $v_k$, a similar reasoning as in the proof of Theorem~\ref{thm:ctrprecise} applies.
	We can use the additional counter of $v_{k+1}$ to keep track of $n_i(s) - n_{i+1}(s)$ or $|s|_\#$ for every $s \in \Sigma^*$ while the rest operate the same as in $v_k$.
	It yields that $\limsup v_k (f) < \limsup v_{k+1}(f)$; thus $v_{k+1}$ is more precise than $v_k$.
	One can similarly show the case for monitoring from above.
\end{proof}

\section{Conclusion and Future Work} \label{sect:conc}

We argued for the need of a quantitative semantic framework for runtime verification which supports
monitors that over- or under-approximate quantitative properties,
and we provided such a framework.

An obvious direction for future work is to systematically explore precision-resource tradeoffs
for different monitor models and property classes.
For example, a quantitative property class that we have not considered in this work is the limit monitoring
of statistical indicators \cite{Ferrere20}.
Other interesting resources that play a role in precision-resource trade-offs are the ``speed''
or rate of convergence of monitors, that is, how quickly a monitor reaches the desired property value,
and ``assumptions'', that is, prior knowledge about the system or the environment that can be used by the
monitor \cite{Henzinger20, Aceto21}.
We also plan to consider the reliability of communication channels \cite{Kauffman20} and how it relates to
monitoring precision.

Another question is the synthesis problem: given a quantitative property $p$ and a register machine template (instruction set and number of registers), does there exist a register machine $M$ generating a verdict function $v$ that universally or existentially monitors $p$ from below or above?

Building on our definitions of continuous and co-continuous quantitative properties, one can define a
generalization of the safety-progress hierarchy \cite{Chang93} to obtain a Borel classification of quantitative properties. 

Lastly, a logical extension of monitoring is \emph{enforcement} \cite{Ligatti10, Falcone11b, Falcone12}, that is,
manipulating the observed system's behavior to prevent undesired outcomes.
We aim to extend the notion of enforceability from the boolean to the quantitative setting and explore precision-resource
trade-offs for enforcement monitors (a.k.a. \emph{shields} \cite{Konighofer17}).

\section*{Acknowledgment}
We thank the anonymous reviewers for their helpful comments.
This research was supported in part by the Austrian Science Fund (FWF) under grant Z211-N23 (Wittgenstein Award).

\balance
\bibliographystyle{IEEEtran}
\bibliography{qam}

\end{document}